\newcommand{\set}[1]{\{#1\}}
\newcommand\eat[1]{}
\newcommand{\B}{$\mathcal{B}$}
\newcommand{\midd}{\mathbin{:}}
\DeclareMathOperator*{\gdot}{\dot{>}}
\DeclareMathOperator*{\gdoteq}{\dot{\geq}}
\newcommand{\pref}{\succsim\xspace}
\newcommand{\Pref}[1][]{
	\ifthenelse{\equal{#1}{}}{\mathrel \succsim}{\mathop{\succsim_{#1}}}
}                                          
\newcommand{\sPref}[1][]{                  
	\ifthenelse{\equal{#1}{}}{\mathrel \succ}{\mathop{\succ_{#1}}}
}                                          
\newcommand{\Indiff}[1][]{                 
	\ifthenelse{\equal{#1}{}}{\mathrel \sim}{\mathop{\sim_{#1}}}
}
\newcommand{\prefset}[1][]{\ifthenelse{\equal{#1}{}}{\mathcal{\succsim}}{\mathcal{\succsim}_{#1}}}
\title{Existence of Stability in Hedonic Coalition Formation Games
\thanks{This material is based upon work supported by the Deutsche Forschungsgemeinschaft under the grant BR-2312/6-1 (within the European Science Foundation's EUROCORES program LogICCC). Thanks to Felix Brandt and Hans Georg Seedig for helpful comments.}}
\author{%
Haris Aziz\inst{1} \and
  Florian Brandl\inst{2}}
\institute{%
  Department of Informatics,
  Technische Universit\"at M\"unchen, 
  85748 M\"unchen, Germany \\
  \email{aziz@in.tum.de}
  \and
  Department of Mathematics,
  Technische Universit{\"a}t M{\"u}nchen,
  85748 M\"unchen, Germany \\
  \email{f.brandl@mytum.de}}
\begin{document}

\maketitle

\begin{abstract}
	In this paper, we examine \emph{hedonic coalition formation games} in which each player's preferences over partitions of players depend only on the members of his coalition. We present three main results in which restrictions on the preferences of the players 
guarantee the existence of stable partitions for various notions of stability. The preference restrictions pertain to 	
\emph{top responsiveness} and \emph{bottom responsiveness} which model optimistic and pessimistic behavior of players respectively. The existence results apply to natural subclasses of \emph{additive separable hedonic games} and \emph{hedonic games with \B-preferences}. It is also shown that our existence results cannot be strengthened to the case of stronger known stability concepts.
\end{abstract}

%
%
%



\section{Introduction}

In many models of multiagent interaction such as roommate matching and exchange of discrete goods, deviations from one outcome to another can cycle and it may well be possible that no stable outcome is guaranteed. This leads to one of the most fundamental questions in game theory: \emph{what are the necessary and sufficient conditions for the existence of stable outcomes?} This question has been examined extensively by researchers working in market design, multiagent systems, and operations research. 
We address this question in the context of coalition formation games in which outcomes are partitions of the players. We focus on \emph{hedonic coalition formation games} in which each player's preferences over partitions depend only on the members of his coalition. Hedonic games are a rich and versatile class of coalition formation games which also encapsulate various stable matching scenarios~\cite[see e.g., ][]{BKS01a,BoJa02a,Cech08a,Hajd06a}.

In game theory and multiagent systems, understanding the conditions under which systems and social outcomes are guaranteed to be in equilibrium is a fundamental research problem.
In this paper, we advance the state of the art on existence results for hedonic games.
We strengthen the recently introduced stability concept \emph{strong Nash stability}~\citep{Kara11a} to \emph{strict strong Nash stability} and show that \emph{top responsiveness} and \emph{mutuality}---conditions different from ones in \citep{Kara11a}---are sufficient for the existence of a strictly strong Nash stable partition in any hedonic game. 
The result applies to natural subclasses of \emph{additive separable hedonic games}~\citep{BoJa02a}. 
 It is also shown that top responsiveness and mutuality together do not guarantee the non-emptiness of the set of \emph{perfect} partitions---a natural concept stronger than strict strong Nash stability. 

We then consider a recently introduced property of hedonic games called \emph{bottom refuseness}~\citep{SuSu10a} which we will refer to as \emph{bottom responsiveness}.  
A new stability notion called \emph{strong individual stability} is formulated which is stronger than both core stability and individual stability. It is shown that bottom responsiveness guarantees the existence of a strong individually stable partition. Also, the combination of \emph{strong bottom responsiveness} and mutuality guarantees the existence of a strong Nash stable partition. 
Our results concerning bottom responsive games cannot be strengthened to any stronger known stability concept. They also apply to \emph{`aversion to enemies' games} introduced in \citep{DBHS06a}.

\paragraph{Outline}

In Section~\ref{sec:related}, we present the backdrop of our results and discuss related work. We then introduce hedonic games and the stability concepts considered for these games in Section~\ref{sec:prel}. The relationships between the stability concepts are expounded and clarified in Section~\ref{sec:rel}. We then proceed to Sections~\ref{sec:tr} and \ref{sec:br} in which the main results are presented. 
Section~\ref{sec:tr} concerns hedonic games satisfying top responsiveness whereas in Section~\ref{sec:br}, existence results concerning bottom responsive games are presented. 
In Section~\ref{sec:app}, well-studied subclasses of hedonic games such as additive separable hedonic games and hedonic games with \B-preferences are considered and it is shown how existence results apply to these games. Finally, we conclude the discussion in Section~\ref{sec:conc}.

\section{Related Work}\label{sec:related}

Identifying sufficient and necessary conditions for the existence of stability in coalition formation has been active area of research. Perhaps the most celebrated result in this field is the existence of a (core) stable matching for the stable marriage problem via the Gale-Shapley algorithm~\citep{GaSh62a}. 
Later, \citet{BKS01a} proved that if a hedonic game satisfies a condition called \emph{weak top coalition property}, then the core is non-empty. 
\citet{BKS01a} also showed that for various restrictions over preferences, stability is still not guaranteed. 

In another important paper, \citet{BoJa02a} formalized Nash stability and individual stability in the context of hedonic games and presented a number of sufficient conditions for the existence of various stability concepts. For instance, they showed that \emph{symmetric additively separable preferences} guarantee the existence of a Nash stable partition. A hedonic game is additively separable if each player has a cardinal value for every other player and the player's utility in a partition is the sum of his values for the players in his coalition. 
The strict core and core is also non-empty for \emph{`appreciation of friends'} and  \emph{`aversion to enemies'} games respectively---two simple classes of additively separable games~\citep{DBHS06a}.

\citet{AlRe04a} proposed a natural preference restriction called 
top responsiveness which is based on the idea that players value other players on how they could complement them in research teams.  They showed that there exists an algorithm called the \emph{Top Covering Algorithm} which finds a core stable partition for top responsive hedonic games. The Top Covering Algorithm can be seen as a generalization of \emph{Gale's Top Trading Cycle algorithm}~\cite{ShSc74a}. 
 \citet{DiSu06a,DiSu07a} simplified the Top Covering Algorithm and proved that top responsiveness implies non-emptiness of the strict core and
if mutuality is additionally satisfied, then a Nash stable partition exists.

In a follow-up paper, \citet{SuSu10a} introduced bottom refuseness in an analogous way to top responsiveness.  They showed that for hedonic games satisfying bottom refuseness,
 the \emph{Bottom Avoiding Algorithm} returns a core stable partition. \citet{SuSu10a} noted that `appreciation of friends' and `aversion to enemies' games satisfy top responsiveness and bottom responsiveness respectively, thereby explaining the results in ~\citep{DBHS06a}.
 
Very recently, \citet{Kara11a} proposed a new stability concept called \emph{strong Nash stability} which is stronger than Nash stability and core stability combined. He showed that strong-Nash is non-empty if the \emph{weak top choice property} (stronger than the weak top coalition property) is satisfied or if preferences are `\emph{descending separable}'. 
We will prove three different results in which natural restrictions on the player preferences guarantee the existence of stable partitions where stability is strong Nash stability or its generalization or variant.

\section{Hedonic Games \& Stability Concepts}\label{sec:prel}

In this section, we review the terminology, notation, and concepts related to hedonic games.

\paragraph{Hedonic games}

A \emph{hedonic coalition formation game} is a pair $(N,\pref)$ where $N$ is a set of players and $\pref$ is a \emph{preference profile} which specifies for each player $i\in N$ the preference relation $ \succsim_i$, a reflexive, complete and transitive binary relation on set $\mathcal{N}_i=\{S\subseteq N \midd i\in S\}$.
$S\succ_iT$ denotes that $i$ strictly prefers $S$ over $T$ and $S\sim_iT$ that $i$ is indifferent between coalitions $S$ and $T$. A \emph{partition} $\pi$ is a partition of players $N$ into disjoint coalitions. By $\pi(i)$, we denote the coalition in $\pi$ which includes player $i$.

\paragraph{Stability Concepts}

We present the various stability concepts for hedonic games. 
Nash stability, strict core stability, Pareto optimality, core stability, and individual rationality are classic stability concepts. Individual stability was formulated in \cite{BoJa02a}. Strong Nash stability was introduced by \citet{Kara11a} and perfect partitions were considered in \citep{ABH11b}.
In this paper, we also introduce strict strong Nash stability and strong individual stability which imply strong Nash stability and core stability respectively.


\begin{itemize}
\item 	A partition $\pi$ is \emph{individually rational (IR)} if no player has an incentive to become alone, i.e., for all $i\in N$, $\pi(i) \succsim_i  \{i\}$. 

\item A partition is \emph{perfect} if each player is in one of his most preferred coalition~\citep{ABH11b}. 
\item A partition is  \emph{Nash stable (NS)} if no player can benefit by 
	moving from his coalition to another (possibly empty) coalition $T$.
\item A partition is \emph{individually stable (IS)} if no player can
	benefit by moving from his coalition to another existing (possibly empty) coalition $T$  while not making the members of $T$ worse off. 
\item A coalition $S \subseteq N$ \emph{blocks} a partition $\pi$, if each
player $i \in S$ strictly prefers $S$ to his current coalition $\pi(i)$ in
the partition $\pi$. 
A partition which admits no blocking coalition is said to be in the \emph{core (C)}. 			

\item A coalition $S \subseteq N$ \emph{weakly blocks} a partition $\pi$,
if each player $i \in S$ weakly prefers $S$ to $\pi(i)$ and there exists at least one player $j \in S$ who strictly prefers $S$ to his current coalition $\pi(j)$. A partition which admits no weakly blocking coalition is in the \emph{strict core (SC)}. 

\item A partition~$\pi$ is \emph{Pareto optimal (PO)} if there is no partition~$\pi'$ with $\pi'(j)\Pref[j]\pi(j)$ for all players~$j$ and $\pi'(i) \sPref[i]\pi(i)$ for at least one player~$i$.
	 
\item For partition $\pi$, $\pi' \neq \pi$ is called \emph{reachable} from $\pi$ by movements of players $H \subseteq N$, denoted by $\pi \mathop{\stackrel{H}{\rightarrow}} \pi'$, if 
$\forall i,j\in N\setminus H, i\neq j: \pi(i) = \pi(j) \Leftrightarrow \pi'(i) = \pi'(j).$

A subset of players $H \subseteq N, H\neq \emptyset$ \emph{strong Nash blocks} $\pi$ if a partition $\pi' \neq \pi$ exists with $\pi \stackrel{H}{\rightarrow} \pi'$ and 
$\forall i\in H: \pi'(i) \succ_i \pi(i).$ 

If a partition $\pi$ is not strong Nash blocked by any set $H \subseteq N$, $\pi$ is called \emph{strong Nash stable (SNS)}~\citep{Kara11a}.

\item 

A subset of players $H \subseteq N, H\neq \emptyset$ \emph{weakly Nash blocks} $\pi$ if a partition $\pi' \neq \pi$ exists with $\pi \stackrel{H}{\rightarrow} \pi'$, $\forall i\in H: \pi'(i) \succsim_i \pi(i)$ and $\exists i\in H: \pi'(i) \succ_i \pi(i)$. 

A partition which admits no weakly Nash blocking coalition is said to satisfy \emph{strict strong Nash stability (SSNS)}.

\item A non-empty set of players $H\subseteq N$ is \emph{strongly individually blocking} a partition $\pi$, if a partition $\pi'$ exists such that: 

\begin{enumerate}
\item $\pi \stackrel{H}{\rightarrow} \pi'$ (as for SNS), 
\item $\forall i \in H: \pi'(i) \succ_i \pi(i)$, and
\item $\forall j \in \pi'(i) \text{ for some } i \in H:  \pi'(j) \succsim_j \pi(j)$.
\end{enumerate}


A partition for which no strongly individually blocking set exists is \emph{strongly individually stable (SIS)}.\footnote{SIS is a natural intermediate stability concept which is implied by strong Nash stability and strict core stability  respectively and it also implies individual stability and core stability.}

\end{itemize}

	\begin{figure}[h]

		\begin{center}
			\scalebox{0.9}{
			\begin{tikzpicture}
				\tikzstyle{pfeil}=[->,>=angle 60, shorten >=1pt,draw]
				\tikzstyle{onlytext}=[]

				\node[onlytext] (PT) at (4,6) {Perfect};
				\node[onlytext] (SSNS) at (4,4.5) {SSNS};
					\node[onlytext] (SNS) at (2,3) {SNS};
				\node[onlytext] (NS) at (2,1.5) {NS};
		
		\node[onlytext] (SIS) at (4,1.5) {SIS};
				\node[onlytext] (SC) at (6,3) {SC};
				\node[onlytext] (IS) at (2,0) {IS};
				\node[onlytext] (PO) at (6,1.5) {PO};
				\node[onlytext] (IR) at (4,-1.5) {IR};
				\node[onlytext] (C) at (6,0) {C};

\draw[pfeil] (IS) to (IR);
\draw[pfeil] (C) to (IR);
\draw[pfeil] (SSNS) to (SC);
\draw[pfeil] (SSNS) to (SNS);
\draw[pfeil] (PT) to (SSNS);
\draw[pfeil] (SNS) to (NS);
\draw[pfeil] (SIS) to (IS);
\draw[pfeil] (SC) to (SIS);
\draw[pfeil] (SIS) to (C);
\draw[pfeil] (SNS) to (SIS);
				\draw[pfeil] (NS) to (IS);
				\draw[pfeil] (SC) to (PO);

			\end{tikzpicture}
			
}

			\end{center}
			\caption{Inclusion relationships between stability concepts for hedonic games. For e.g, every NS partition is also IS. NS, SC, PO, C and IR are classic stability concepts. IS was formulated in \cite{BoJa02a}; SNS in \cite{Kara11a}; and perfect partitions in \cite{ABH11b}. We also introduce SSNS and SIS in this paper.}
			\label{fig:relations}
			
			\end{figure}
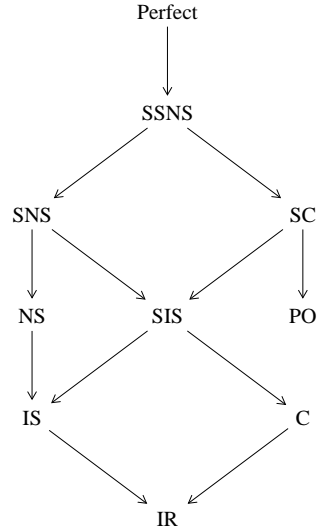

Depending on the context, we will utilize abbreviations like SIS, SNS, SSNS, IS etc. either for adjectives (for e.g. IS for individually stable) or for nouns (for e.g. IS for individual stability).

\section{Relations between Stability Concepts}\label{sec:rel}

In this section, we will explore and clarify the inclusion relationships between the stability concepts.
The inclusion relationships between stability concepts are depicted in Figure~1.

\begin{proposition}\label{prop:stability-relations}
	Strict core stability implies strong individual stability which implies individual stability and also core stability.
	\end{proposition}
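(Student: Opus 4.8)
The plan is to prove all three implications by contraposition: in each case start from a partition that violates the weaker property and produce a witness---a weakly blocking coalition, respectively a strongly individually blocking set---that defeats the stronger one. All three arguments are direct unwindings of the definitions; the only content is choosing the right set of deviators $H$ and the right target partition $\pi'$. For \emph{strict core $\Rightarrow$ SIS}: assume $\pi$ is not strongly individually stable, witnessed by a set $H$ and a partition $\pi'$ meeting conditions~(1)--(3), fix any $i\in H$, and set $S:=\pi'(i)$ (nonempty, since $i\in S$). Every $j\in S$ has $\pi'(j)=S$; for $j\in S\cap H$ condition~(2) gives $S\succ_j\pi(j)$, and for $j\in S\setminus H$ condition~(3) (applicable because $j\in\pi'(i)$ with $i\in H$) gives $S\succsim_j\pi(j)$. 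Hence every member of $S$ weakly prefers $S$ to its current coalition and $i$ strictly prefers it, so $S$ weakly blocks $\pi$ and $\pi$ is not in the strict core.

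For \emph{SIS $\Rightarrow$ IS}: assume $\pi$ is not individually stable, so some player $i$ can move from $\pi(i)$ to an existing (possibly empty) coalition $T$ with $T\cup\{i\}\succ_i\pi(i)$ and $T\cup\{i\}\succsim_j\pi(j)$ for all $j\in T$. Take $H:=\{i\}$ and let $\pi'$ be the partition obtained from $\pi$ by performing this move. Then condition~(1) holds since only $i$ changed coalition; condition~(2) is $T\cup\{i\}\succ_i\pi(i)$; and condition~(3), which here ranges exactly over the members of $\pi'(i)=T\cup\{i\}$, is precisely the requirement that the move not harm the members of $T$ (and is trivial for $i$). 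So $H$ strongly individually blocks $\pi$.

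For \emph{SIS $\Rightarrow$ core}: assume $\pi$ is not in the core, witnessed by a blocking coalition $S$ with $S\succ_i\pi(i)$ for all $i\in S$. Take $H:=S$ and define $\pi'$ by $\pi'(i):=S$ for $i\in S$ and $\pi'(j):=\pi(j)\setminus S$ for $j\notin S$. Conditions~(2) and~(3) are immediate from $S\succ_i\pi(i)$ (strict preference implies weak, and the union of the $\pi'$-coalitions of the members of $H=S$ is just $S$). The one point needing a short check is condition~(1), namely $\pi\stackrel{S}{\rightarrow}\pi'$: for $i,j\notin S$ one verifies $\pi(i)=\pi(j)\Leftrightarrow\pi(i)\setminus S=\pi(j)\setminus S$, using that the coalitions of a partition are pairwise equal or disjoint and that $i\in\pi(i)\setminus S$.

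I expect this last verification---that when the deviating coalition $S$ cuts across several coalitions of $\pi$ the remaining players still form a partition that is reachable from $\pi$ via $\stackrel{S}{\rightarrow}$---to be the only mildly delicate step; everything else is an immediate translation of the definitions. The other thing to keep straight throughout is the exact scope of condition~(3) in the definition of a strongly individually blocking set (it constrains every player sharing a $\pi'$-coalition with some deviator, not just the deviators themselves), since that scope is precisely what makes the first implication go through.
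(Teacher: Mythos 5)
Your proof is correct and follows essentially the same route as the paper's: each implication is established by unwinding the definitions and exhibiting the witnessing coalition (in particular, taking $S=\pi'(i)$ for a deviator $i$ to convert a strongly individually blocking pair $(H,\pi')$ into a weakly blocking coalition, and taking $H=S$ to convert a core-blocking coalition into a strong Nash-style deviation). The only difference is that you spell out the reachability checks for $\pi\stackrel{H}{\rightarrow}\pi'$ explicitly, which the paper leaves implicit.
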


\begin{proof}
	
	Strict core stability implies strong individual stability. Assume that a partition $\pi$ is strict core stable but not strong individually stable. Then, there exists a coalition $S\subseteq N$ such that $S\notin \pi$ and each player in $S$ is at least as happy as in $\pi$ and one player in $S$ is strictly happier than in $\pi$. But this means that $\pi$ is not strict core stable.
	
	Strong individual stability trivially implies individual stability.
	
	Finally, we show that strong individual stability implies core stability. Assume that a partition $\pi$ is strong individually stable but not core stable. Then there is a core deviating coalition $S$. But this would mean that each player 
$i\in S$ is strictly better off than in $\pi(i)$. But this means that $\pi$ is not strong individually stable. 
This complete the proof.
\end{proof}


Strong Nash stability as introduced by ~\citet{Kara11a} is quite a strong stability notion as seen by the following simple proposition.

\begin{proposition}\label{prop:SNS-is-strong}
Strong Nash stability implies Nash stability and also core stability. 

Furthermore, even if a partition is both strict core stable and Nash stable, it is not necessarily strong Nash stable.
\end{proposition}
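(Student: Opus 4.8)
The plan is to establish the two assertions separately. For ``strong Nash stability implies Nash stability and core stability'', I would show that every Nash-blocking unilateral move and every core-blocking coalition is, as a special case, a strong Nash blocking set. Concretely, if $\pi$ is not Nash stable there is a player $i$ and a coalition $T\in\pi\cup\{\emptyset\}$ with $(T\cup\{i\})\succ_i\pi(i)$; taking $H=\{i\}$ and letting $\pi'$ be the partition obtained by moving $i$ into $T$, one checks directly against the definition of $\stackrel{H}{\rightarrow}$ that $\pi\stackrel{H}{\rightarrow}\pi'$ --- when a single player relocates, two of the remaining players lie in a common coalition before the move iff they do afterwards --- and $\pi'(i)\succ_i\pi(i)$, so $\{i\}$ strong Nash blocks $\pi$. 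Likewise, if a coalition $S$ blocks $\pi$ in the usual core sense, put $H=S$ and let $\pi'$ be the partition in which $S$ forms its own coalition and every other original coalition $C$ is shrunk to $C\setminus S$; then $\pi\stackrel{S}{\rightarrow}\pi'$ holds (pairs outside $S$ retain their togetherness status), $\pi'\neq\pi$ since $S\notin\pi$, and $\pi'(i)=S\succ_i\pi(i)$ for all $i\in S$. Both checks are routine verifications.

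For the second (separation) claim I would exhibit an explicit small hedonic game together with a partition that is both strict core stable and Nash stable yet is strong Nash blocked. The conceptual idea to exploit is the asymmetry between the two kinds of deviation: a strong Nash deviating group may move into a coalition of \emph{non-members without their consent}, whereas a single deviator can only join an \emph{already existing} coalition, so a group can reach configurations no individual can. A three-player game suffices. Let $N=\{1,2,3\}$ and $\pi=\{\{1,2\},\{3\}\}$, and choose preferences so that player $3$ ranks $\{3\}$ strictly above every larger coalition, players $1$ and $2$ both rank $\{1,2,3\}$ strictly first, while $\{1,2\}\succ_1\{1,3\}\succ_1\{1\}$ and $\{1,2\}\succ_2\{2,3\}\succ_2\{2\}$ (completing each preference relation arbitrarily). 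Then $H=\{1,2\}$ can move to $\pi'=\{\{1,2,3\}\}$: this is reachable from $\pi$ because there are no pairs of players outside $H$ to constrain, and $\pi'(1)=\{1,2,3\}\succ_1\pi(1)$, $\pi'(2)=\{1,2,3\}\succ_2\pi(2)$, so $\pi$ is not strong Nash stable.

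It remains to verify that this $\pi$ \emph{is} Nash stable and strict core stable, which I would do by inspecting the short list of alternatives. For Nash stability: player $1$'s only unilateral moves lead to $\{1,3\}$ or $\{1\}$, player $2$'s to $\{2,3\}$ or $\{2\}$, and player $3$'s to $\{1,2,3\}$, and all of these are strictly worse for the mover by construction; note in particular that player $1$ cannot reach $\{1,2,3\}$ on his own from $\pi$ (that would require player $2$ to move as well), which is exactly why the blocking set needs two players. For strict core stability: every coalition not already in $\pi$ contains a player who does not even weakly prefer it to his current coalition --- the singletons $\{1\}$, $\{2\}$ fail for players $1$, $2$, the pairs $\{1,3\}$, $\{2,3\}$ fail for players $1$, $2$, and the grand coalition $\{1,2,3\}$ fails for player $3$. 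The main obstacle throughout is designing these preferences consistently: the group's target $\{1,2,3\}$ must be strictly preferred by $1$ and $2$, which superficially threatens a core block, and the resolution is to make player $3$ --- who has no veto against a strong Nash deviation but does block any core or strict-core deviation --- strictly prefer to remain alone.
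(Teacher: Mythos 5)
Your proposal is correct, and both halves check out. For the implication part you make explicit the routine reductions (a unilateral Nash deviation is the strong Nash blocking set $H=\{i\}$; a core-blocking coalition $S$ is the blocking set $H=S$ with $S$ seceding), which the paper simply asserts ``follows from the definitions''; your verification that the reachability condition $\pi \stackrel{H}{\rightarrow} \pi'$ holds in each case is the right thing to check and is done correctly. For the separation, however, you take a genuinely different route. The paper uses a four-player game in which the blocking set $\{2,4\}$ (resp.\ $\{1,3\}$) executes a \emph{swap} of partners between the two coalitions of $\pi=\{\{1,2\},\{3,4\}\}$; as a bonus, that game admits \emph{no} strong Nash stable partition at all, which is more than the proposition requires. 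Your three-player example instead exploits the fact that the reachability relation only constrains togetherness among pairs of non-deviators, so a deviating group $H=\{1,2\}$ may absorb the lone outsider $3$ into its coalition against his will --- a configuration no single deviator and no (strict) core coalition can reach, precisely because $3$'s consent would be needed there. Your example is smaller and isolates a different (and arguably more striking) feature of the strong Nash stability definition, while the paper's example carries the extra information that SNS partitions can fail to exist entirely. I verified your example: $\pi=\{\{1,2\},\{3\}\}$ is Nash stable and strict core stable under the stated preferences, and $\{1,2\}$ strong Nash blocks it via $\pi'=\{\{1,2,3\}\}$ since the non-deviator set $\{3\}$ contains no pair to constrain.
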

\begin{proof}
	The first statement follows from the definitions of the stability concepts and was already pointed out by ~\citet{Kara11a}. In fact, it can also easily be shown that SNS implies SIS. If a partition is SNS, then there is no strong Nash blocking set. This implies that there does not exist any strongly individually blocking set. 	
	
	
We now show that even if a partition is both strict core stable and Nash stable, it is not necessarily strong Nash stable.	
The following example shows a game, that admits a strict core and Nash stable partition but no strong Nash stable partition.

Let $(N, \pref)$ be a game with $N=\lbrace 1,2,3,4 \rbrace$ and the preference profile specified as follows:
\begin{align*}
&\lbrace 1,2 \rbrace \succ_1 \{1,4\} \succ_1 \lbrace 1 \rbrace  \succ_1  ...\\
&\lbrace 2,3 \rbrace \succ_2 \lbrace 1,2 \rbrace \succ_2 \lbrace 2 \rbrace  \succ_2  ...\\
&\lbrace 3,4 \rbrace \succ_3 \{2,3\} \succ_3 \lbrace 3 \rbrace  \succ_3   ...\\
&\lbrace 1,4 \rbrace \succ_4 \lbrace 3,4 \rbrace \succ_4 \lbrace 4 \rbrace  \succ_4  ...
\end{align*}

It is easy to check, that the partitions $ \pi = \lbrace \lbrace 1,2 \rbrace, \lbrace 3,4 \rbrace \rbrace$ and $ \pi' = \lbrace \lbrace 1,4 \rbrace, \lbrace 2,3 \rbrace \rbrace$ are both (even strictly) core stable and Nash stable. But neither of them is strong Nash stable since $ \lbrace2,4 \rbrace$ is blocking $\pi$ and $\pi'$ is blocked by $ \lbrace 1,3 \rbrace$. Obviously any partition containing a coalition with 3 or more players is not even Nash stable, since each player prefers being alone to any coalition with more than $2$ players. Also $  \lbrace \lbrace 1,3 \rbrace, \lbrace 2,4 \rbrace \rbrace$ is not even Nash stable, since it is not individually rational.

Nash and core stability prevent single players from moving to another (possibly empty) coalition or several players forming a new coalition respectively. In the given partitions $\pi$ and $\pi'$, it is possible for a pair of players to improve by switching coalitions and therefore prevent $\pi$ and $\pi'$ from being strong Nash stable.
\end{proof}

In the next proposition, we show that although strong Nash stability is a strong stability concept, it does not imply strict core stability nor Pareto optimality. 

\begin{proposition}\label{prop:SNS-is-weak}
	Strict strong Nash stability implies strong Nash stability, strict core stability 
	and Pareto optimality. 
	
	On the other hand, strong Nash stability does not imply strict core stability nor Pareto optimality. 
\end{proposition}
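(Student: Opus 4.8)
The plan is to split the claim into two halves. For the positive half, I would first show that strict strong Nash stability (SSNS) implies strong Nash stability (SNS): this is essentially immediate from the definitions, since a strong Nash blocking set $H$ (with $\pi'(i)\succ_i\pi(i)$ for all $i\in H$) is a fortiori a weakly Nash blocking set (it satisfies $\pi'(i)\succsim_i\pi(i)$ for all $i\in H$ and $\pi'(i)\succ_i\pi(i)$ for at least one $i\in H$), so absence of the latter forces absence of the former. Next I would show SSNS implies strict core stability. Suppose $\pi$ is SSNS but some coalition $S\notin\pi$ weakly blocks it, so every $i\in S$ has $S\succsim_i\pi(i)$ and some $j\in S$ has $S\succ_j\pi(j)$. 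I would like to build a weakly Nash blocking set from $S$, but the subtlety is that moving the players of $S$ together into one coalition may disrupt the coalition structure on $N\setminus S$ in a way that violates the reachability condition $\pi\stackrel{H}{\rightarrow}\pi'$. The fix is to take $H$ to be $S$ together with \emph{all} players who currently share a coalition with some member of $S$, i.e.\ $H=\bigcup_{i\in S}\pi(i)$; those extra players we leave as singletons in $\pi'$, which keeps their utility possibly lower — so this naive choice does not immediately work either. The cleaner route is: let $\pi'$ be obtained from $\pi$ by replacing, for each coalition $C\in\pi$ meeting $S$, the set $C$ by $(C\cap S)$ kept inside the new block $S$ and $C\setminus S$ split off; taking $H=\bigcup_{C\in\pi,\,C\cap S\neq\emptyset} C$ makes $\pi\stackrel{H}{\rightarrow}\pi'$ hold by construction, but one must check the preference conditions only for the $S$-players to get weak Nash blocking — and indeed weak Nash blocking only requires $\pi'(i)\succsim_i\pi(i)$ for $i\in H$, which fails for the split-off players. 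So the genuinely correct argument, which I would carry out, is to observe that weakly Nash blocking with $H=S$ and $\pi'$ agreeing with $\pi$ outside $S$ is \emph{not} in general reachable, hence one instead argues directly: because SSNS is defined via arbitrary reachable $\pi'$, one picks $H$ minimal such that moving $H$ realizes the coalition $S$, and shows the players forced to move along are exactly those that can be made no worse off — here is where I would exploit that in the strict-core deviation only $S$ needs to improve and the rest need only be weakly better, matching the asymmetric requirement in the definition of weakly Nash blocking. Finally, SSNS implies Pareto optimality follows because a Pareto-improving partition $\pi'$ is reachable from $\pi$ by $H=\{i : \pi'(i)\neq\pi(i)\}$ and witnesses a weak Nash block (all players in $H$ weakly improve, at least one strictly).

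For the negative half, I would exhibit a single hedonic game together with a partition that is strong Nash stable but neither strict core stable nor Pareto optimal. The natural candidate is a small game (three or four players) in which the SNS partition leaves some pair or coalition able to weakly block — every member weakly prefers the new coalition and one strictly prefers it — while no player can \emph{strictly} improve by any reachable deviation, so that strong Nash stability survives. Concretely I would look for preferences where the SNS outcome assigns players to coalitions that are tied (under $\sim$) with a commonly preferred coalition $S$ for all but one player, who strictly prefers $S$; indifferences are the key device, since they block weak deviations without enabling strict ones. I would then verify by inspection that (i) no nonempty $H$ strong Nash blocks the partition — checking all $H$, using that strict improvement for \emph{every} member of $H$ is impossible — and (ii) the pair/coalition $S$ weakly blocks it, killing strict core stability, and that the same tie yields a Pareto-dominating partition, killing Pareto optimality.

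The main obstacle I expect is the implication ``SSNS $\Rightarrow$ strict core stability'': reconciling a strict-core weak block (which only constrains the behaviour of players in $S$) with the definition of a weakly Nash block (which, via the reachability relation $\pi\stackrel{H}{\rightarrow}\pi'$, forces one to account for \emph{every} player whose coalition membership changes). The resolution hinges on choosing $H$ to be precisely the set of players whose coalition changes when $S$ forms and all other affected coalitions are left intact or merely shed the members who left — and on checking that, under this choice, the weak-preference condition holds for all of $H$; I anticipate this requires being careful that players outside $S$ who end up in smaller coalitions are in fact not part of $H$ at all, i.e.\ choosing $\pi'$ so that $N\setminus S$ is repartitioned in a way that leaves as many players as possible in their original coalition. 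The rest of the proof is routine definition-chasing and a finite case analysis on the example.
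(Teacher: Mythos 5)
Your overall decomposition matches the paper's: SSNS~$\Rightarrow$~SNS by weakening the blocking condition, SSNS~$\Rightarrow$~SC by turning a weakly blocking coalition into a weakly Nash blocking set, Pareto optimality either directly or via the folklore fact that SC implies PO, and a small example with indifferences for the negative half. However, the step you flag as the ``main obstacle'' is not an obstacle at all, and your discussion of it contains a false claim. The relation $\pi \stackrel{H}{\rightarrow} \pi'$ only constrains players \emph{outside} $H$, and only requires that the equivalence relation ``being in the same coalition'' restricted to $N\setminus H$ is preserved --- it does not require their coalitions to be unchanged as sets, nor does it impose any preference condition on them. So if $S$ weakly blocks $\pi$ in the strict-core sense, take $H=S$ and let $\pi'$ consist of $S$ together with the residual coalitions $C\setminus S$ for $C\in\pi$: distinct residuals stay distinct and members of a common residual stay together, so $\pi\stackrel{S}{\rightarrow}\pi'$ holds automatically, and the weak-preference and strict-preference conditions for a weak Nash block are exactly the strict-core weak-blocking conditions on $S$. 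Your assertion that ``$H=S$ \dots is not in general reachable'' is wrong, and the detours through $H=\bigcup_{i\in S}\pi(i)$ and minimal $H$ are unnecessary; your closing paragraph eventually gestures at the correct construction, but the body of the argument as written would not survive being carried out, since it rests on a misreading of reachability.

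The second gap is that the negative half is a plan rather than a proof: you describe the kind of game you would look for (ties that permit weak but not strict deviations) but exhibit no game and verify nothing. The paper gives a concrete four-player instance in which $\{\{1,2\},\{3,4\}\}$ is SNS yet Pareto dominated by $\{\{2,3\},\{1,4\}\}$, and observes that since SC implies PO it suffices to refute PO alone --- one example then kills both non-implications. Your intended construction is of exactly this type, but until you write down explicit preferences and check that no nonempty $H$ strong Nash blocks the partition, the claim is not established. Your direct argument that SSNS implies Pareto optimality (via $H=\{i:\pi'(i)\neq\pi(i)\}$) is correct and is in fact slightly more self-contained than the paper, which leaves that implication to the SC~$\Rightarrow$~PO folklore.
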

\begin{proof}
	Strict strong Nash stability trivially implies strong Nash stability. 
Strict strong Nash stability also implies strict core stability. 
	If a partition is strong strong Nash stable, there exists no new coalition $H$, in which each player at least as happy and one player is strictly better off. Therefore, the partition is also strict core stable.

	Now, we will show that strong Nash stability does not imply strict core stability nor Pareto optimality. 
	Since, it is well-known that strict core stability implies Pareto optimality, it is sufficient to show that strong Nash stability does not imply Pareto optimality.

	Strong Nash stability does not imply Pareto optimality. 
	Consider the following four-player hedonic game:

\begin{itemize}
	        \item[] $\set{1,2} \sim_1 \set{1,3} \sim_1 \set{1,4} \succ_1 \cdots$
			\item[] $\set{1,2}\sim_2\set{2,3}\sim_2\set{2,4}\succ_2 \cdots$ 
	             \item[] $\set{2,3}\sim_3\set{3,4}\succ_3 \cdots$  
	\item[] $\set{1,4}\sim_4 \set{2,4}\succ_4 \set{3,4} \succ_4 \cdots $
\end{itemize}
	
	Then, the partition $\{\{1,2\}, \{3,4\}\}$ is strong Nash stable. However it is Pareto dominated by $\{\{2,3\}, \{1,4\}\}$.
	\end{proof}


In the next sections, we will present the central results of the paper. 


\section{Top responsiveness}\label{sec:tr}
\label{sec:exists}
 Top responsiveness~\citep{AlRe04a,DiSu06a,DiSu07a} and bottom responsiveness~\citep{SuSu10a} are natural restrictions that are imposed on the individual preferences and not on the whole preference profile. The idea is that a player's preference for a coalition depends on the best and worst subcoalitions respectively. In this section, we present a result that a partition fulfilling SSNS exists for hedonic games satisfying top responsiveness and an additional property called mutuality (with respect to top responsiveness).

\paragraph{Top responsiveness}

Top responsiveness is based on \emph{choice sets}---sets of players which each player wants to be with. 
Let $Ch(i,S)$---the \emph{choice sets} of player $i$ in coalition $S$---be defined as follows:
$$Ch(i,S)=\{S'\subseteq S \midd  (i\in S') \wedge (S'\succsim_i S'' ~\forall S''\subseteq S)\}.$$


A game satisfies \emph{top responsiveness} if for each $i\in N$,  the following three conditions hold:

\begin{enumerate}

\item for each $X\in \mathcal{N}_i$, $|Ch(i,X)|=1$, (we denote by $ch(i,X)$ the unique maximal set of player $i$ on $X$ under $\pref_i$),   
\item for each pair $X,Y\in \mathcal{N}_i$, $X\succ_i Y$ if $ch(i,X)\succ_i ch(i,Y)$; 
\item for each pair $X,Y\in \mathcal{N}_i$, $X\succ_i Y$ if $ch(i,X)= ch(i,Y)$ and $X\subset Y$.
\end{enumerate}

A hedonic game satisfying top responsiveness additionally satisfies \emph{mutuality} if
$$\forall i,j \in N, X \in \mathcal{N}_i \cap \mathcal{N}_j: i\in ch(j,X) \Leftrightarrow j\in ch(i,X).$$ 
We will also specify different notions of mutuality with respect to hedonic games satisfying bottom responsiveness and also separable hedonic games. When the context is clear, we will refer to the condition simply as mutuality.

\begin{example}\label{example:tr-mutual}
	Let $(N,\pref)$ be a game with $N=\lbrace 1,2,3 \rbrace$ and the preference profile specified as follows:
	\begin{align*}
	&\lbrace 1,2 \rbrace \succ_1 \lbrace 1,2,3 \rbrace  \succ_1  \lbrace 1 \rbrace  \succ_1 \lbrace 1,3 \rbrace  \\
	&\lbrace 1,2,3 \rbrace \succ_2 \lbrace 1,2 \rbrace \sim_2 \lbrace 2,3 \rbrace  \succ_2 \lbrace 2 \rbrace \\
	&\lbrace 2,3 \rbrace \succ_3 \lbrace 1,2,3 \rbrace  \succ_3 \lbrace 3 \rbrace \succ_3 \lbrace 1,3 \rbrace 
	\end{align*}
	Then, $(N,\pref)$ satisfies top responsiveness and mutuality. 
\end{example}

We are now in a position to present our first result.


\begin{theorem}\label{th:ssns-top}
	Top responsiveness and mutuality together guarantee the existence of an 
	SSNS partition.
\end{theorem}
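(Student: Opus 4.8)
The plan is to exhibit a partition via (a streamlined version of) the Top Covering Algorithm of \citet{AlRe04a,DiSu06a,DiSu07a} and to check directly that it is SSNS. For a player set $V$, call $W\subseteq V$ \emph{$V$-closed} if $ch(j,V)\subseteq W$ for every $j\in W$; note $V$ is $V$-closed, and iterating unions of choice sets, $ch(i,V)\subseteq\bigcup_{j\in ch(i,V)}ch(j,V)\subseteq\cdots$, produces the smallest $V$-closed set containing $i$. Starting from $V_1=N$, at step $t$ let $S_t$ be an inclusion-minimal nonempty $V_t$-closed set and set $V_{t+1}=V_t\setminus S_t$; since $V_t$ is finite this terminates with a partition $\pi=\{S_1,\dots,S_k\}$. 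For orientation: top responsiveness alone makes $\pi$ strict-core stable and, adding mutuality, \citet{DiSu07a} obtain Nash stability; so by Proposition~\ref{prop:stability-relations} and Figure~\ref{fig:relations}, $\pi$ is already SIS, Nash stable and strict-core stable, and the real content of the theorem is upgrading this to SSNS --- which by Proposition~\ref{prop:SNS-is-strong} cannot follow from ``Nash stable $+$ strict-core stable'' alone, so the layered structure of $\pi$ together with mutuality must be used essentially.

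Before the argument I would record three facts about $\pi$. (i) For $j\in S_t$, $ch(j,S_t)=ch(j,V_t)\subseteq S_t$: the set $ch(j,V_t)$ is $\succsim_j$-maximal among subcoalitions of $V_t\supseteq S_t$ and is contained in $S_t$ by $V_t$-closedness, hence is also maximal within $S_t$. (ii) By inclusion-minimality, $S_t$ has no proper nonempty $V_t$-closed subset; equivalently, the choice-set closure of any single $j\in S_t$ is all of $S_t$. (iii) Mutuality forbids an earlier-layer player from profiting by joining a group of strictly-later-layer players: if $i\in S_{t'}$ and $Y\subseteq V_{t'+1}$, then $ch(i,Y\cup\{i\})=\{i\}$ and hence $\pi(i)\succsim_i Y\cup\{i\}$ --- any $y\in ch(i,Y\cup\{i\})$ would, by mutuality, satisfy $i\in ch(y,Y\cup\{i\})$, while (i) applied at level $V_{t'}$ places $ch(i,V_{t'})$ inside $S_{t'}$, disjoint from $Y$, and a player whose choice set is always his own singleton is peeled off alone. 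A symmetric argument limits which earlier-layer players can appear in $ch(i,C)$ for a deviation coalition $C$.

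To prove SSNS I would argue by contradiction. Suppose $H$ weakly Nash-blocks $\pi$ via $\pi'$; deleting from $H$ every player with $\pi'(i)=\pi(i)$ preserves the reachability requirement $\pi\stackrel{H}{\rightarrow}\pi'$, so assume every member of $H$ actually moves. Among the members of $H$ that are strictly better off, pick $i_0$ in the earliest layer, say $i_0\in S_{t_0}$; then $\pi'(i_0)\succ_{i_0}\pi(i_0)=S_{t_0}$ and, by (i), $ch(i_0,S_{t_0})=ch(i_0,V_{t_0})$ is $i_0$'s best subcoalition of $V_{t_0}$. If $ch(i_0,\pi'(i_0))\subseteq V_{t_0}$, then $ch(i_0,V_{t_0})\succsim_{i_0}ch(i_0,\pi'(i_0))$, and $\pi'(i_0)\succ_{i_0}S_{t_0}$ together with top responsiveness (conditions~1 and~2) forces $ch(i_0,\pi'(i_0))=ch(i_0,S_{t_0})$ with $\pi'(i_0)$ properly inside (or set-incomparable to) $S_{t_0}$; minimality (ii) then yields $j\in\pi'(i_0)\cap S_{t_0}$ whose choice set $ch(j,V_{t_0})$ is not contained in $\pi'(i_0)$, so $j$ is strictly worse off in $\pi'$, and iterating this descent while tracking $\pi\stackrel{H}{\rightarrow}\pi'$ and using mutuality locates such a $j$ inside $H$ --- contradicting weak blocking. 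If instead $ch(i_0,\pi'(i_0))\not\subseteq V_{t_0}$, it contains a player $m$ from a layer strictly before $t_0$; by mutuality $i_0\in ch(m,\pi'(m))$ with $\pi'(m)=\pi'(i_0)$, and applying (iii) to the earliest-layer player occurring in $\pi'(i_0)$, together with the choice of $i_0$ as the earliest strictly-improved member of $H$, again yields a contradiction.

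The main obstacle I anticipate is precisely the case $ch(i_0,\pi'(i_0))\subseteq V_{t_0}$: since weak Nash blocking only constrains members of $H$, one must locate a \emph{strictly worse-off member of $H$}, not merely some worse-off player, which requires carefully propagating the reachability constraint $\pi\stackrel{H}{\rightarrow}\pi'$ through the layers (non-members of $H$ can be pulled along). The same step is where the corner cases left undetermined by the three top-responsiveness conditions --- set-incomparable coalitions sharing a choice set --- have to be dispatched; as in fact (iii), this is done by repeatedly invoking that a player whose unique choice set is his singleton forms a singleton of $\pi$ and therefore cannot meaningfully join a blocking coalition.
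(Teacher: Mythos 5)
Your overall strategy is the paper's: run the Top Covering Algorithm (under mutuality your inclusion-minimal closed sets are exactly the connected components the paper peels off) and derive a contradiction from a weakly Nash blocking pair $(H,\pi')$ by locating a member of $H$ who is strictly worse off. But two essential steps are missing, and they are precisely the ones the paper's proof is built around. The first is the global invariant $ch(j,N)=ch(j,V_t)\subseteq \pi(j)$ for every $j\in S_t$ (the paper's Lemma~\ref{claim_ch}, proved by induction over the rounds --- this is exactly where mutuality does its work: a player surviving to round $t$ has nobody from earlier components in his choice set). Your fact (i) only gives $ch(j,S_t)=ch(j,V_t)$, which is weaker; without the full invariant, your key inference in Case~1 --- ``$ch(j,V_{t_0})\not\subseteq\pi'(i_0)$, so $j$ is strictly worse off'' --- does not follow, because $\pi'(j)$ may contain earlier-layer players whom $j$ prefers to everyone in $V_{t_0}$. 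Your fact (iii) is meant to patch this, but its justification is not a proof (it conflates $ch(y,V_{t'})$ with $ch(y,Y\cup\{i\})$, a maximum over a smaller ground set), and in Case~2 you apply (iii) to the earliest-layer player $m$ of $\pi'(i_0)$ even though the other members of $\pi'(i_0)$ need not all lie in layers strictly after $m$'s, so its hypothesis is not met.

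The second gap is the step you yourself flag as the main obstacle: producing a strictly worse-off player who is \emph{in} $H$. ``Iterating this descent \dots locates such a $j$ inside $H$'' is not an argument, and nothing in your write-up shows how the reachability constraint is actually propagated. The paper closes this with a concrete two-case analysis around a strictly improved deviator $j$: if $\pi(j)\cap\pi'(j)\subseteq H$, then (via Lemma~\ref{claim_ch} and condition~3 of top responsiveness) $\pi(j)\cap\pi'(j)$ weakly blocks in the strict-core sense, contradicting the strict core stability of the algorithm's output; otherwise $\pi(j)\cap\pi'(j)$ contains a non-deviator, the condition $\pi \stackrel{H}{\rightarrow} \pi'$ then forces $\pi(j)\setminus\pi'(j)\subseteq H$, and connectedness of $\pi(j)$ yields some $k\in\pi(j)\setminus\pi'(j)$ who loses a member of $ch(k,N)$ and is therefore a strictly worse-off deviator. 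Until you supply substitutes for both of these steps, the proof is not complete.
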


We prove Theorem~\ref{th:ssns-top} by showing that if a hedonic game satisfies top responsiveness and mutuality, then the Top Covering Algorithm of \citep{AlRe04a,DiSu06a,DiSu07a} returns an SSNS partition. 
 Therefore, we identify conditions different than the ones identified by \citet{Kara11a} for which strong Nash stability is guaranteed. 
Since SSNS is stronger than SNS (Proposition~\ref{prop:SNS-is-weak}) which in 
turn is stronger than even the combination of Nash stability and strict core stability (Proposition~\ref{prop:SNS-is-strong}), Theorem~\ref{th:ssns-top} simultaneously strengthens the result in \citep{DiSu06a} and \citep{DiSu07a} in which it was shown that top responsiveness and mutuality together guarantee the existence of a Nash stable and strict core partition.

It can also be proved that Theorem~\ref{th:ssns-top} is optimal in the sense that it does not extend to perfect partitions. To be precise, we show that top responsiveness and mutuality together do not guarantee the existence of a perfect partition.

\begin{proposition}\label{th:perfect-no-top-mutual}
	Top responsiveness and mutuality together do not guarantee the existence of a perfect partition.
\end{proposition}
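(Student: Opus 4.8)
The statement is a non-implication, so the plan is to exhibit one hedonic game that satisfies top responsiveness and mutuality but admits no perfect partition. A candidate is already at hand: the three-player game $(N,\pref)$ of Example~\ref{example:tr-mutual}, which we have observed satisfies both conditions. If one wants a self-contained argument, one re-verifies top responsiveness for each of players $1,2,3$ by computing $ch(i,X)$ for every $X\in\mathcal{N}_i$ and checking conditions (1)--(3), and re-verifies mutuality by checking the biconditional $i\in ch(j,X)\Leftrightarrow j\in ch(i,X)$ for each pair $\{i,j\}$ and every $X$ containing both; this is a short finite computation.

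The key structural fact I would record first is that under top responsiveness, condition (1) gives $|Ch(i,N)|=1$ for every player $i$, so each player has a \emph{unique} globally most preferred coalition, namely $ch(i,N)$. Hence a partition $\pi$ is perfect if and only if $\pi(i)=ch(i,N)$ for all $i\in N$.

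Next I would compute these top coalitions in the example: $ch(1,N)=\{1,2\}$, $ch(2,N)=\{1,2,3\}$, and $ch(3,N)=\{2,3\}$. Now observe that no partition can place player $1$ in $\{1,2\}$ while simultaneously placing player $2$ in $\{1,2,3\}$: if $\pi(1)=\{1,2\}$ then $2\in\pi(1)$, so $\pi(2)=\{1,2\}\neq ch(2,N)$. Thus the requirement $\pi(i)=ch(i,N)$ for all $i$ is unsatisfiable, so the game has no perfect partition. As a sanity check one can instead just enumerate the five partitions of $\{1,2,3\}$ and verify directly that in each one at least one player fails to be in his (unique) most preferred coalition.

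\textbf{Main obstacle.} There is essentially none beyond bookkeeping: the only genuine work is the finite verification that the example satisfies top responsiveness and mutuality, which is routine and, as Example~\ref{example:tr-mutual}, is already in place. The conceptual content is simply that top responsiveness pins down one best coalition per player while mutuality does not force these best coalitions to be mutually compatible, which is exactly what makes such a counterexample exist.
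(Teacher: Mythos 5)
Your proposal is correct and takes essentially the same route as the paper: the paper's proof is a one-line appeal to the counterexample of Example~\ref{example:tr-mutual}, and you use exactly that game, merely spelling out the verification (the computation of $ch(i,N)$ for each player and the incompatibility of $\pi(1)=\{1,2\}$ with $\pi(2)=\{1,2,3\}$) that the paper leaves implicit.
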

\begin{proof}
By counter example. In the game in Example~\ref{example:tr-mutual}, top responsiveness and mutuality are satisfied but no perfect partition exists.
\end{proof}

Now that we have stated Theorem~\ref{th:ssns-top} and its complementing Proposition~\ref{th:perfect-no-top-mutual}, we will present the proof of Theorem~\ref{th:ssns-top}.


\paragraph{Proof of Theorem~\ref{th:ssns-top}}

Firstly, we need additional definitions and a description of the Top Covering Algorithm.
For each $X \subseteq N $, we denote by $\backsim_X$ the relation on $X \times X $ where $i \backsim_X j$ if and only if $ j\in ch(i,X)$. In this case $j$ is called a \emph{neighbor} of $i$ in $X$. Note that in 
the preference profiles satisfies top responsiveness mutuality, then $\backsim_X$ is a symmetric relation.\\
The connected component $CC(i,X)$ of $i$ with respect to $X$ is defined as follows:
$$CC(i,X) = \{k\in X \midd \exists j_1, \dots , j_l \in X: i=j_1 \backsim_X \dots \backsim_X j_l=k \}.$$

If $j \in CC(i,X)$, $j$ is called \emph{reachable} from $i$ in $X$. Also note that $CC(j,X) \subseteq CC(i,X) $ if $j$ is reachable from $i$ and if mutuality is satisfied, then the following holds: 
 $\forall X\subseteq N$, $i,j \in X: i\in CC(j,X) \Leftrightarrow j\in CC(i,X)$.\\

Now we are ready to present the simplified Top Covering Algorithm provided by \citet{DiSu06a,DiSu07a}, adapted to the notation defined above. The algorithm is specified as Algorithm~\ref{alg:TCA}.

The following lemma will be used in the proof to Theorem \ref{th:ssns-top}. 

\begin{lemma}\label{claim_ch}
Let $(N,\pref)$ be a game satisfying top responsiveness and mutuality and $\pi$ be the partition resulting by applying the simplified top covering algorithm to it. Then
\begin{align*}
\forall i\in N: ch(i,N) \subseteq \pi(i)
\end{align*}
\end{lemma}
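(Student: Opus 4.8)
The plan is to analyze the Top Covering Algorithm step by step and track how the choice set $ch(i,N)$ of each player gets handled as the algorithm progresses. The Top Covering Algorithm (which is referenced but not reproduced in the excerpt) builds the partition iteratively: in each round it looks at the subgame on the remaining players $N'$, and forms a coalition by finding some player whose connected component $CC(i,N')$ is ``small enough'' relative to the choice structure — typically one picks a player $i$ minimizing $|CC(i,N')|$ and extracts the coalition $S = CC(i,N')$ (or more precisely, it finds a minimal connected component and removes it). So the proof naturally proceeds by induction on the rounds of the algorithm.

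First I would establish the key invariant: when a coalition $S$ is removed in some round, operating on the set of remaining players $N'$ (with $N' = N$ in the first round), then for every player $i \in S$ we have $ch(i,N) = ch(i,N')$, i.e. the player's globally most-preferred subcoalition is already contained in $N'$. This should follow because $S$ is the connected component of some player under $\backsim_{N'}$ and $S$ is chosen to be minimal; since $ch(i,N')$ consists of neighbors of $i$ in $N'$, and by mutuality $\backsim$ is symmetric, $ch(i,N') \subseteq CC(i,N')$, and the minimality of the extracted component forces $CC(i,N') = S$ for the players defining it. Then one argues that $ch(i,N') = ch(i,N)$: if $ch(i,N)$ contained a player outside $N'$, that player was removed in an earlier round, but then by the invariant applied to that earlier round together with top responsiveness (condition 2, that preferences are driven by choice sets) one derives a contradiction with the minimality/order in which components were extracted. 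The precise bookkeeping here — showing no element of $ch(i,N)$ can have been peeled off earlier — is the main obstacle, and it is where mutuality (symmetry of $\backsim$) and top responsiveness condition 2 both get used.

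Once the invariant $ch(i,N) = ch(i,N')$ is in place for the round in which $i$ is assigned, I would show $ch(i,N) \subseteq \pi(i)$, where $\pi(i) = S$ is the coalition extracted in that round. Since $ch(i,N') \subseteq CC(i,N') = S = \pi(i)$ and $ch(i,N) = ch(i,N')$, we get $ch(i,N) \subseteq \pi(i)$ directly. I would be careful about the case where $i$ is not the player whose component is minimal but merely lies in the extracted component: here I need $CC(i,N') = S$ as well, which holds because connected components partition $N'$ under the (symmetric, by mutuality) relation $\backsim_{N'}$, so $i \in S = CC(j,N')$ implies $CC(i,N') = S$.

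To summarize the structure: (1) set up induction on algorithm rounds with remaining-player set $N'$; (2) show $\backsim_{N'}$ is symmetric and components partition $N'$ (from mutuality); (3) prove the invariant that the extracted coalition $S$ satisfies $CC(i,N') = S$ for all $i \in S$ and $ch(i,N) = ch(i,N')$ for all $i \in S$, using top responsiveness to rule out that any neighbor was removed earlier; (4) conclude $ch(i,N) = ch(i,N') \subseteq CC(i,N') = S = \pi(i)$. I expect step (3), and specifically ruling out earlier removal of a choice-set member, to require the most care — it likely needs an auxiliary observation that once a player's true choice set becomes fully available (all its members present), the algorithm will not separate them, combined with an argument on the order in which components are extracted.
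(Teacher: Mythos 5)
Your overall skeleton coincides with the paper's: induct over the rounds of the algorithm, maintain the invariant that every still-remaining player $j$ satisfies $ch(j,R^k)=ch(j,N)$, and conclude $ch(i,N)=ch(i,R^k)\subseteq CC(i,R^k)=S^k=\pi(i)$ for the round $k$ in which $i$ is removed, using that mutuality makes $\backsim_{R^k}$ symmetric so the connected components partition $R^k$ and $CC(i,R^k)=CC(i',R^k)=S^k$ for every $i\in S^k$. Your steps (1), (2) and (4) are exactly what the paper does.

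The gap is in your step (3), which you yourself flag as the main obstacle and leave unresolved; worse, your sketch points at the wrong mechanism. You propose a contradiction argument invoking ``the minimality/order in which components were extracted'' and top responsiveness condition 2. Neither is needed, and the minimality of $|CC(i',R^k)|$ plays no role in this lemma (it matters for other properties of the algorithm, not for this one). The paper's inductive step is direct and uses only mutuality and condition 1: suppose $ch(j,R^k)=ch(j,N)$ for all $j\in R^k$, and let $j\in R^{k+1}$ be a survivor of round $k$. For every removed player $i\in S^k=CC(i',R^k)$ we have $ch(i,R^k)\subseteq CC(i',R^k)$, hence $j\notin ch(i,R^k)$; mutuality then gives $i\notin ch(j,R^k)$. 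Since this holds for every removed $i$, we get $ch(j,R^k)\subseteq R^{k+1}$, and because $R^{k+1}\subseteq R^k$ the unique maximizer over subsets of $R^k$ containing $j$ is still available in $R^{k+1}$ and remains the unique maximizer there, so $ch(j,R^{k+1})=ch(j,R^k)=ch(j,N)$. Note that the invariant is carried on the \emph{survivors}; the statement you need for the players removed in round $k$ is then simply the induction hypothesis for $R^k$ itself, so no separate argument about the extracted coalition is required. Rewriting your step (3) this way closes the proof.
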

\begin{proof}
	First we show by induction over the iterations of the algorithm that $ch(i,R^k)=ch(i,N) \; \forall i \in R^k,\; k=1,2,...$. 
For $k=1$, this is obviously true, because $R^1=N$. 
Assume by induction, that $ch(i,R^k)=ch(i,N) \; \forall i \in R^k$. 
Let $i'$ be the player selected in the $k$-th iteration of Step \ref{step:select} 
and $j \in R^{k+1}$. 
Therefore $j\notin ch(i,R^k) \; \forall i \in CC(i',R^k)$. 
Because of mutuality $i \notin ch(j,R^k) \; \forall i \in CC(i',R^k)$. 
So $ch(j,R^k) \subseteq R^{k+1}$ and therefore $ch(j,R^{k+1}) = ch(j,R^k) = ch(j,N)$.
	
	Now take an arbitrary player $i \in N $ and denote by $k$ the iteration of the algorithm in which $i$ was added to his coalition, i.e. $i \in S^k$. Let $i'$ be the player selected in the $k$-th iteration of Step \ref{step:select}, so $i \in CC(i',R^k)$. Because of mutuality, $CC(i',R^k) = CC(i,R^k) $ and clearly $ch(i,R^k) \subseteq CC(i,R^k)$. From above, we know that $ch(i,N) = ch(i,R^k) \subseteq CC(i,R^k) =S^k = \pi(i)$.
\end{proof}

\begin{algorithm}[htb]
  \caption{Top Covering Algorithm}
  \label{alg:TCA}
  \textbf{Input:} A hedonic game $(N,\pref)$ satisfying top responsiveness.

  \begin{algorithmic}[1] 
\STATE $R^1 \leftarrow N$; $\pi \leftarrow \emptyset$.

\FOR{$k=1$ to $|N|$}
\STATE \label{step:select} Select $i\in R^k$ such that $|CC(i,R^k)| \leq |CC(j,R^k)|$ for each $j\in R^k$.
\STATE $S^k\leftarrow  CC(i,R^k)$; $\pi \leftarrow  \pi \cup \lbrace S^k \rbrace$;  and $R^{k+1} \leftarrow  R^k \setminus S^k$
 \IF{$R^{k+1} = \emptyset$}
\RETURN $\pi$
\ENDIF
\ENDFOR
\RETURN $\pi$
 \end{algorithmic}
\end{algorithm}
We note here that Lemma~\ref{claim_ch} may not hold, if mutuality is violated.


As shown by \citet{DiSu06a,DiSu07a} the resulting partition of the simplified top covering algorithm is strict core stable as well as Nash stable if preferences as mutual. We are now ready to present the proof of Theorem~\ref{th:ssns-top}.


\begin{proof}
Let $\pi $ be the resulting partition and suppose it is not strictly strong Nash stable. Then a pair $(H,\pi')$ exists where $H\subseteq N$ is the set of deviators and $\pi' $ is the partition resulting after the deviation, i.e. $\pi \stackrel{H}{\rightarrow} \pi'$.
Firstly, by Lemma~\ref{claim_ch}, $ch(i,N) \subseteq CC(i,N)$ 
$\forall i\in N$. Since $H$ is a coalition blocking strict strong Nash stability, the following holds:
\begin{align*}
&\forall i\in H: \pi'(i) \succsim_i \pi(i) \quad and\\
& \exists j\in H: \pi'(j) \succ_j \pi(j).
\end{align*}

Now consider the player $j$, who is better off in his new coalition $\pi'(j)$. Assume that $\pi(j) \cap \pi'(j) \subseteq H$, which means only deviators in $ \pi(j) \cap \pi'(j)$. For $i \in  \pi(j) \cap \pi'(j)$: $ ch(i,\pi'(i)) \succsim_i ch(i,\pi(i))$, since $i \in H$ by assumption. We also know that $ch(i,\pi(i)) = ch(i,N)$ by Lemma~\ref{claim_ch}. Therefore, for $i \in  \pi(j) \cap \pi'(j)$: $ ch(i,\pi'(i)) \succsim_i ch(i,\pi(i)) = ch(i,N)$. Because of uniqueness of choice sets in the definition of top responsiveness, $ch(i,\pi'(i)) = ch(i,N)$. So $ch(i,N) \subseteq \pi(i)\cap \pi'(i) = \pi(j) \cap \pi'(j)$. 
\begin{align*}
\Longrightarrow \forall i \in \pi(j) \cap \pi'(j): (\pi'(j) \cap \pi(j)) \succsim_i \pi'(j). 
\end{align*}

Due to assumption $\pi(j) \cap \pi'(j) \subseteq H$, the following holds:
\begin{align*}
&\forall i \in \pi(j) \cap \pi'(j): (\pi(j) \cap \pi'(j)) \succsim_i \pi'(j) \succsim_i \pi(j) = \pi(i) \quad \&\\
&(\pi(j) \cap \pi'(j)) \succsim_j \pi'(j) \succ_j \pi(j)
\end{align*}

So $\pi(j) \cap \pi'(j)$ would be a coalition blocking strict core stability, but \citet{DiSu07a} proved that $\pi$ as produced by the simplified top covering algorithm has to be strict core stable. Therefore $\pi'(j) \cap \pi(j) \nsubseteq H $ and there is at least one non-deviator in $\pi(j) \cap \pi'(j)$. Let us call this player $i'$. 

Now take a look at the players in $ \pi(j) \setminus \pi'(j)$. Note that this is not an empty set, because otherwise $\pi'(j) \supset \pi(j) \Longrightarrow \pi'(j) \precsim_j \pi(j)$. If one of them is not in $H$, then he was in the same coalition as $i'$ in $\pi$, namely $\pi(j)$, and is now in a different, which is not consistent with $\pi \stackrel{H}{\rightarrow} \pi'$. So $ (\pi(j) \setminus \pi'(j)) \subseteq H$. Because $\pi(j)$ is a connected component, at least one player $k$ in $\pi(j) \setminus \pi'(j)$ has a friend $l$ in $\pi'(j)$, meaning they are in each other's choice sets and as mentioned $l \in ch(k,N) \subseteq \pi(k)$. But now $l \notin \pi'(k)$ and therefore $ \pi'(k) \prec_k \pi(k)$ which contradicts $k$ being a deviator.
\end{proof}


\section{Bottom responsiveness}\label{sec:br}

In this section, we present the central results concerning hedonic games which satisfy bottom responsiveness.

\paragraph{Bottom responsiveness} 
Bottom responsiveness is a restriction on the preferences of each player in a hedonic game which models conservative or pessimistic agents. 
In contrast to top responsiveness, bottom responsiveness is based on \emph{avoid sets}---sets of players which each player wants to avoid having in his coalition.	
	
	For any player $i\in N$ and $S\in \mathcal{N}_i$, $Av(i,S)$---the set of \emph{avoid sets} of player $i$ in coalition $S$---is defined as follows:
	$$Av(i,S)=\{S'\subseteq S\midd (i\in S') \wedge  (S'\precsim_i S'' ~\forall S''\subseteq S) \}.$$
	A game satisfies \emph{bottom responsiveness} if for each $i\in N$, the following conditions hold: 
	
	\begin{enumerate}
	
\item for each pair $X,Y\in \mathcal{N}_i$, $X\succ_i Y$ if $X'\succ_i Y'$
for each $X'\in Av(i,X)$ and each $Y'\in Av(i,Y)$; and
\item  for each $i\in N$ and $X,Y\in \mathcal{N}_i$, $Av(i,X)\cap Av(i,Y)\neq \emptyset$ and $|X|\geq |Y|$ implies $X\succsim_i Y$.

\end{enumerate}

A hedonic game $(N,\pref)$ satisfies \emph{strong bottom responsiveness} if it is bottom responsive and if for each $i\in N$ and $X\in \mathcal{N}_i$, $|Av(i,X)|=1$. 
By $av(i,X)$, we denote the unique minimal set of player $i$ on $X$ under $\pref_i$.
The strong part of bottom responsiveness is analogous to Property 1 in the definition of top responsiveness.
	A hedonic game $(N,\pref)$ satisfying strong bottom responsiveness additionally satisfies \emph{mutuality} if for all $i,j\in N$, and $X$ such that $i,j\in X$, $i\in av(j,X)$ if and only if $j\in av(i,X)$.  

\begin{example}\label{example:br}
Let $(N,\pref)$ be a game with $N= \{1,2,3\}$ and the preference profile specified as follows:
\begin{align*}
\{1,3\} \succ_1 \{1\} \succ_1 \{1,2,3\} \succ_1 \{1,2\}\\
\{2,3\} \succ_2 \{2\} \succ_2 \{1,2,3\} \succ_2 \{1,2\}\\
\{1,2,3\} \succ_3 \{1,3\} \sim_3 \{2,3\} \succ_3 \{3\}
\end{align*}

Then, $(N,\pref)$ satisfies strong bottom responsiveness and also mutuality (with respect to strong bottom responsiveness).
\end{example}

For bottom responsive games, we prove that an SIS partition is guaranteed to exist even in the absence of mutuality. 

\begin{theorem}\label{th:bottom-sis}
	Bottom responsiveness guarantees the existence of an SIS partition.
\end{theorem}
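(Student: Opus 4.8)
The natural strategy is to mimic the top-responsive case: run the analogue of the Top Covering Algorithm for bottom responsiveness---the Bottom Avoiding Algorithm of \citet{SuSu10a}---and argue that the partition it returns is not only core stable (which \citet{SuSu10a} already established) but in fact strongly individually stable. So the first step is to recall the Bottom Avoiding Algorithm and the structural facts it guarantees. The algorithm peels off coalitions iteratively; at each stage, on the remaining set $R^k$ of players one looks at avoid sets restricted to $R^k$, builds the analogue of the neighbor relation (now via $Av(i,R^k)$ rather than $ch(i,R^k)$), and extracts a suitable ``closed'' group of players to form the next coalition $S^k$. The key invariant, parallel to Lemma~\ref{claim_ch}, will be that for every player $i$ the avoid set $av(i,\pi(i))$ computed on his final coalition is ``the same'' as it would be on all of $N$ in an appropriate sense---more precisely, that the players $i$ most wants to avoid in $N$ are \emph{not} in $\pi(i)$ (or are forced to be, if unavoidable), so that $\pi(i)$ is at least individually rational and locally optimal with respect to bottom responsiveness. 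I would state and prove this invariant by induction over the iterations, exactly as in Lemma~\ref{claim_ch}: once a coalition is removed, the avoid sets of the remaining players on $R^{k+1}$ coincide with those on $R^k$, because no newly-removed player could have been in a minimal (most-avoided) subcoalition of a remaining player.

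With that invariant in hand, the second step is the core argument. Suppose for contradiction that the returned partition $\pi$ admits a strongly individually blocking set $H$ with witness $\pi'$: every $i\in H$ strictly prefers $\pi'(i)$ to $\pi(i)$, and every player $j$ who ends up in a deviators' new coalition weakly prefers $\pi'(j)$ to $\pi(j)$. Consider the first coalition $S^k$ (in algorithm order) that contains a member of $H$, and among such, pick the earliest-formed $S^k$; let $i\in H\cap S^k$. Because $S^k$ was formed before any coalition touched by $H$ was dissolved, and because $\pi\stackrel{H}{\rightarrow}\pi'$ forces all non-deviators in $S^k=\pi(i)$ to stay together with each other, the players of $S^k\setminus H$ remain in one block of $\pi'$. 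Now I use bottom responsiveness together with the invariant: since $i$ strictly improves, his new coalition $\pi'(i)$ must be strictly better for him than $\pi(i)=S^k$, which by the bottom-responsiveness axioms means $\pi'(i)$ must ``escape'' some player that $S^k$ was stuck with---but by the algorithm's construction and the minimality with which $i$'s group was chosen, $S^k$ was already a locally avoid-optimal set for $i$ among the players then remaining, and all the players $i$ could have escaped by deviating are precisely the ones already removed in earlier iterations, hence not available to $\pi'(i)$ under the reachability constraint. Pushing this through should yield that $\pi'(i)\precsim_i \pi(i)$, contradicting $i\in H$. The conditions (2) and (3) in the definition of a strongly individually blocking set (weak improvement for the ``absorbed'' non-deviators) will be needed to rule out the case where $i$ gains by merging with a previously-separated coalition rather than by splitting one off.

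The step I expect to be the main obstacle is making precise the claim ``all players $i$ could strictly benefit from avoiding are already gone.'' In the top-responsive proof this worked cleanly because choice sets are singletons under top responsiveness and the connected-component structure is symmetric under mutuality; here we do \emph{not} assume mutuality, and avoid sets need not be singletons (we only assume plain bottom responsiveness, not the strong version), so the neighbor relation driving the Bottom Avoiding Algorithm is not symmetric and $Av(i,S)$ may have several elements. I will therefore need to be careful that the algorithm's peeling order is defined robustly (e.g., always removing a group that contains \emph{all} of someone's most-avoided players, so that the removed block is ``downward closed'' for the avoid relation) and that the bottom-responsiveness axiom~(2)---equal-or-larger coalitions sharing a common avoid set are weakly preferred---is invoked correctly to handle the non-singleton case and the ``absorbed non-deviator'' case. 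I expect that once the right notion of ``the algorithm removes an avoid-closed group'' is pinned down, the contradiction falls out, but getting that definition exactly right, and checking it is consistent with \citet{SuSu10a}'s correctness proof, is the delicate part; mutuality, notably, is \emph{not} required here, which is why the theorem is stated for bottom responsiveness alone.
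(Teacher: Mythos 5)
Your plan has a genuine gap at its load-bearing step, and it is not the route the paper takes. You propose to run the Bottom Avoiding Algorithm and transplant the structure of the proof of Theorem~\ref{th:ssns-top}, with a key invariant that ``once a coalition is removed, the avoid sets of the remaining players on $R^{k+1}$ coincide with those on $R^k$, because no newly-removed player could have been in a minimal (most-avoided) subcoalition of a remaining player.'' But in the top-responsive analogue (Lemma~\ref{claim_ch}) the corresponding step is supplied precisely by mutuality: from $j\notin ch(i,R^k)$ for the removed players one concludes $i\notin ch(j,R^k)$ only because the choice relation is symmetric, and the paper explicitly warns that Lemma~\ref{claim_ch} may fail without mutuality. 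Here you correctly observe that mutuality is not assumed --- but that observation undermines your own inductive step: without any symmetry there is no reason a player removed at stage $k$ is absent from the avoid sets of the players who remain, so the invariant can fail. The rest of the argument is then asserted rather than derived (``pushing this through should yield $\pi'(i)\precsim_i\pi(i)$''), and you yourself flag the definition of an ``avoid-closed'' removed block as unresolved. As written, this is a proof plan whose central lemma is unproved and likely false in the stated generality.

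The paper's actual proof avoids the algorithm entirely and is much shorter. It considers the set $P$ of individually rational partitions, i.e.\ those $\pi$ with $\{i\}\in Av(i,S)$ for every $S\in\pi$ and $i\in S$, orders partitions by their coalition-size vectors sorted decreasingly (Definition~\ref{def:ordering}), and takes a maximal element $\pi$ of $P$ under $\gdot$. If $\pi$ admitted an (S)IS deviation of player $i$ into a coalition $S$ with $S\cup\{i\}\succ_i\pi(i)$ and no member of $S$ made worse off, one checks that the resulting partition is again in $P$, and condition~2 of bottom responsiveness (coalitions whose avoid sets intersect are compared by size) forces $|S\cup\{i\}|>|\pi(i)|$; hence the new partition is strictly larger under $\gdot$, contradicting maximality. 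This potential-function argument needs no structural facts about the Bottom Avoiding Algorithm and is exactly why mutuality is dispensable; it also yields a decentralized procedure, since arbitrary iterated deviations from the singleton partition must terminate.
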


As a corollary, a core stable partition and an individually stable partition is guaranteed to exist. Previously, it was only known that the core is non-empty for bottom responsive games~\citep{SuSu10a}. In contrast to the result by \citet{SuSu10a}, the proof of Theorem~\ref{th:bottom-sis} does not require the Bottom Avoiding Algorithm. 
We associate with each IR partition a vector of coalition sizes in decreasing order. It is then shown via lexicographic comparisons between the corresponding vectors that arbitrary deviations between partitions are acyclic. 
With an additional natural constraint, even SNS is guaranteed (Theorem~\ref{th:bottom-SNS}). 

\begin{theorem}\label{th:bottom-SNS}
	Strong bottom responsiveness and mutuality together guarantee the existence of an SNS partition.
	
\end{theorem}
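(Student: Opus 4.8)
The plan is to mimic the structure used for Theorem~\ref{th:ssns-top}, but with the Bottom Avoiding Algorithm of \citet{SuSu10a} playing the role of the Top Covering Algorithm, and with ``avoid sets'' $av(i,X)$ replacing ``choice sets'' $ch(i,X)$. First I would recall (or re-derive in the notation of this paper) the Bottom Avoiding Algorithm: it builds the partition iteratively, at each stage picking a coalition that is ``safe'' in the sense that every member has all the players in its minimal avoid set outside that coalition, and that is maximal with this property among the remaining players; strong bottom responsiveness guarantees $|Av(i,X)|=1$ so $av(i,X)$ is well-defined, and mutuality makes the relation $i \in av(j,X) \Leftrightarrow j \in av(i,X)$ symmetric, which is what lets the iterative construction terminate with a partition $\pi$ in which for every $i$, none of the players in $av(i,N)$ (more precisely in $av(i,R^k)$ at the stage $k$ where $i$ is assigned) lie in $\pi(i)$. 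The analogue of Lemma~\ref{claim_ch} to establish is: $av(i,\pi(i)) = av(i,N)$ for all $i$, i.e. removing the already-assigned coalitions does not change the worst subcoalition seen by a remaining player — this follows from mutuality exactly as in Lemma~\ref{claim_ch}, because a player $j$ leaving with an earlier coalition cannot be in $av(i,\cdot)$ of a remaining $i$ unless $i$ is in $av(j,\cdot)$, in which case $i$ would have left with $j$.

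Next I would show $\pi$ is individually rational (each player can be alone, and by bottom responsiveness condition~2 applied with the common avoid set $\{i\}$, being alone is dominated by any larger coalition sharing that avoid set — actually we need the reverse direction here, so more carefully: $\pi$ is IR because the Bottom Avoiding construction never forces anyone below their singleton; alternatively $av(i,\pi(i))=\{i\}$ would already give $\pi(i)\succsim_i\{i\}$ by condition~2). Then, for the main claim, suppose $\pi$ is strong-Nash blocked by $(H,\pi')$ with $\pi\stackrel{H}{\rightarrow}\pi'$ and $\pi'(i)\succ_i\pi(i)$ for all $i\in H$. The key structural observation, paralleling the top-responsive proof, is to look at a deviator $j$ and its new coalition $\pi'(j)$. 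Using strong bottom responsiveness: since $\pi'(j)\succ_j\pi(j)$ and $av(j,\pi(j))=av(j,N)$, by condition~1 we get $av(j,\pi'(j))\succ_j av(j,N)$, and by uniqueness of avoid sets $av(j,\pi'(j))$ is a strictly better (as a coalition under $\succsim_j$) worst-case than $av(j,N)$; in particular $av(j,N)\not\subseteq \pi'(j)$, so there is some $l\in av(j,N)$ with $l\notin\pi'(j)$. By mutuality $j\in av(l,N)=av(l,\pi(l))$ and $\pi(l)=\pi(j)$. Now I trace the consequences for $l$: either $l\in H$, and then $\pi'(l)\succ_l\pi(l)$, but $j\notin\pi'(l)$ removes the forced-enemy $j$ only if... — here one needs the analogue argument that dropping a member of your avoid set should help, not hurt, combined with the size comparison in condition~2 to derive the contradiction; or $l\notin H$, and then since $l$ stayed put but $j$ (who shared $l$'s coalition in $\pi$) moved, and $j\notin\pi'(l)$ while possibly other co-members changed, we contradict $\pi\stackrel{H}{\rightarrow}\pi'$ unless all of $\pi(j)\setminus\pi'(j)\subseteq H$; then connectivity of the avoid-graph on $\pi(j)$ forces a boundary pair, exactly as in the top-responsive proof.

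The hard part will be the two-way case analysis on $l$, and in particular making the ``removing an enemy helps'' step rigorous: bottom responsiveness only compares coalitions via their \emph{entire} avoid sets (condition~1) or via a \emph{shared} avoid set together with a size inequality (condition~2), so I cannot directly say ``$\pi'(l)$ without $j$ is better for $l$''. The route I expect to take is: show that for the non-deviator $i'\in\pi(j)\setminus\pi'(j)$ (if one exists) we get an immediate contradiction with $\pi\stackrel{H}{\rightarrow}\pi'$ since $i'$ and $j$ part ways without $j\in H$ being enough — wait, $j\in H$, so instead I localize to $\pi(j)\cap\pi'(j)$ and show (as in Theorem~\ref{th:ssns-top}) that if it were entirely inside $H$ it would weakly block the core, contradicting that the Bottom Avoiding output is core stable \citep{SuSu10a}; hence some non-deviator $i'$ lies in $\pi(j)\cap\pi'(j)$; and then $\pi(j)\setminus\pi'(j)$, being nonempty (else $\pi'(j)\supsetneq\pi(j)$, and $av(j,\pi(j))\subseteq av(j,\pi'(j))\cup\dots$ forces $\pi'(j)\precsim_j\pi(j)$ via condition~2 with the shared avoid set $av(j,N)$ and $|\pi'(j)|\geq|\pi(j)|$ — a contradiction), must lie entirely in $H$, and connectivity of $\backsim$-type avoid-neighbours inside the constructed coalition $\pi(j)$ yields an adjacent pair $k\in\pi(j)\setminus\pi'(j)$, $l\in\pi'(j)$ with $l\in av(k,N)\subseteq\pi(k)=\pi(j)$, hence $l\notin\pi'(k)$ and, by condition~2 applied to $\pi(k)$ and $\pi'(k)$ sharing... — no, they need not share an avoid set, so the final contradiction must instead come from: $k$ is a deviator with $av(k,\pi'(k))\succ_k av(k,N)$, but $\pi'(k)\ni$ fewer of $k$'s worst-case companions only helps if condition~2's hypotheses hold, so I would arrange that $av(k,\pi(k))=\{k,l\}$-type minimality makes $av(k,\pi'(k))$ contain $k$ alone or a set incomparable in a way condition~1 forbids. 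I will need to double-check whether strong bottom responsiveness alone suffices here or whether the statement is really using the full strength of mutuality to keep the avoid-graph symmetric throughout; my expectation is that mutuality is exactly what rules out the ``non-deviator stranded'' configurations, and strong bottom responsiveness (uniqueness of $av$) is what makes the ``strictly better worst case $\Rightarrow$ avoid set changed'' inference valid, so both hypotheses enter essentially as in the top-responsive proof.
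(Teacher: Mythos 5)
There is a genuine gap, and in fact the approach is not the one the paper takes. Your proposed analogue of Lemma~\ref{claim_ch}, namely $av(i,\pi(i))=av(i,N)$ for the output of the Bottom Avoiding Algorithm, is the wrong dualization and is false in general: in the top-responsive setting the algorithm puts $ch(i,N)$ \emph{inside} $\pi(i)$, but in the bottom-responsive setting the whole point is to keep $av(i,N)\setminus\{i\}$ \emph{outside} $\pi(i)$. (Concretely, in an aversion-to-enemies game $av(i,N)$ consists of $i$ together with all his enemies, while any reasonable output coalition contains none of them, so $av(i,\pi(i))=\{i\}\neq av(i,N)$.) The correct invariant is $av(i,\pi(i))=\{i\}$, and once you use that, the subsequent step ``$\pi'(j)\succ_j\pi(j)$ and condition~1 give $av(j,\pi'(j))\succ_j av(j,N)$'' collapses: condition~1 is only a sufficient condition for preference, so its contrapositive yields at best a weak comparison of avoid sets, and your final contradiction---the ``removing an enemy helps'' step, which you yourself flag as unresolved---never closes, because bottom responsiveness gives no way to compare $\pi(k)$ and $\pi'(k)$ when their avoid sets are distinct and not related by condition~1's hypothesis. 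The avoid-graph connectivity argument also does not transfer, since the coalitions produced here are not connected components of a symmetric neighbour relation in the way $CC(i,X)$ is.

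The paper's actual proof avoids the algorithm entirely and is a potential-function argument, reusing Definition~\ref{def:ordering} from the proof of Theorem~\ref{th:bottom-sis}. Let $P$ be the set of partitions $\pi'$ such that $\{i\}=av(i,S)$ for every $S\in\pi'$ and $i\in S$; $P$ is nonempty (singletons) and finite, so it has a maximal element $\pi$ under $\gdot$ (coalition sizes sorted decreasingly, compared lexicographically). If $(H,\pi')$ strong Nash blocks $\pi$, one first shows $\pi'\in P$: for $i\in H$ we must have $av(i,\pi'(i))=\{i\}$ or else $\pi'(i)\succ_i\pi(i)$ fails, and for a non-deviator $j$ joined by some $i\in H$, mutuality converts $j\notin av(i,\pi'(j))$ into $i\notin av(j,\pi'(j))$, so $av(j,\pi'(j))=\{j\}$; all other players keep or shrink their coalitions. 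Then condition~2 of bottom responsiveness (shared avoid set $\{i\}$ plus the size clause) forces $|\pi'(i)|>|\pi(i)|$ for every $i\in H$, and a short case analysis on the largest coalition meeting $H$ gives $\pi'\gdot\pi$, contradicting maximality. If you want to salvage your line of attack you would essentially have to reprove this size-monotonicity argument inside the algorithmic framework; the mutuality and strong-bottom-responsiveness hypotheses enter exactly at the two places indicated above ($\pi'\in P$ and the strict size increase), not where your sketch places them.
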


We point out that Theorem~\ref{th:bottom-sis} cannot be extended any further to take care of strict core stability and Theorem~\ref{th:bottom-SNS} cannot be extended to SSNS. The reason is that \emph{symmetric `aversion to enemies' games}---a subclass of strong bottom responsive games which satisfy mutuality---may not admit a strict core stable partition~\citep[Example 4, ][]{DBHS06a}.

Now that we have stated our results concerning hedonic games satisfying bottom responsiveness, we sketch the proofs.

\paragraph{Proof of Theorem~\ref{th:bottom-sis}}

For the use of further proofs, we introduce an ordering relation on the partitions. The definition will also apply to the proof of Theorem~\ref{th:bottom-SNS}.

\begin{definition}\label{def:ordering}
Let $N= \{1,...,n\}$ be a set of players and $\pi,\pi'$ two partitions of $N$, where $\pi = (S_1,...,S_k)$ and $\pi'=(T_1,...,T_l)$ with $|S_i| \geq |S_{i+1}| \; \forall i\in \{1,...,k-1\}$ and $|T_j| \geq |T_{j+1}| \; \forall j\in \{1,...,l-1\}$ respectively. We say, that
\begin{align*}
	\noindent
\pi \gdot \pi' &\Leftrightarrow \exists i \leq \min\{k,l\}: |S_i| > |T_i| \; and \; |S_j| = |T_j| \; \forall j<i \quad \\
\&~ \pi \doteq \pi' &\Leftrightarrow k=l \; and \; \forall i\leq k: |S_i| = |T_i|.
\end{align*}
\end{definition}

The relation $\gdot$ is complete, transitive and asymmetric, and places an ordering on the set of 
partitions. 
We now present the proof of Theorem~\ref{th:bottom-sis} in which we utilize the relation $\gdot$.

\begin{proof}\label{proof_br_is}
	To simplify the presentation, we prove that every bottom responsive game admits an IS partition. The same argument can also be used to show that every 
	bottom responsive game admits an SIS partition.
	
We show individual stability for each 
maximum element according to $\gdot$ of the set of individual rational coalitions. 
Consider the set $P = \{ \pi' \midd \pi' \; partitions  \; N \; and \; \forall S \in \pi', i \in S: \{i\} \in Av(i,S) \} $. Note that $P \neq \emptyset$, because the partition consisting of only singletons is in $P$ and that $P$ is a finite set because the number of partitions is finite. Denote by $\pi$ a maximal element of $P$ according to $\gdot$, i.e. $\pi \gdoteq \pi' \; \forall \pi' \in P$. By definition, $\pi$ is individually rational.

Now assume $\pi$ is not individually stable. Then, there exists a player $i \in N $ and a coalition $S \in \pi \cup \{\emptyset\}$, such that $S \cup \{i\} \succ_i \pi(i)$ and $\forall j \in S: S \cup \{i\} \succsim_j S$. Now we show that the partition $\pi$ resulting after the deviation of $i$ is still individually rational and therefore an element of $P$. Clearly $S \neq \emptyset$ because of individual rationality of $\pi$. Furthermore $\{j\} \in Av(j,S\cup \{i\}) \; \forall j \in S$, because if not $  S\cup \{i\} \prec_j S$ for some $j\in S$.\\
Consider a player $j \in \pi(i) \setminus \{i\}$. Due to individual rationality of $\pi$, $\{j\} \in Av(j,\pi(i))$, which implies $T \succsim_j \{j\} \; \forall T \subseteq \pi(i)$ with $j\in T$. So $\pi(i) \setminus \{i\} \succsim_j \{j\}$. All other players $j \in N\setminus (\pi(i) \cup S)$ are not affected by the deviation of $i$ because of the hedonic game setting. Therefore $\pi'$ is individually rational and $\pi' \in P$.

The last step is to show $\pi' \gdot \pi$, which contradicts the maximality of $ \pi$ in $P$. Because player $i$ improves by changing, $|S \cup \{i\}| > |\pi(i)|$ follows from condition 2) of bottom responsiveness . So $(S \cup \{i\}, \pi(i) \setminus \{i\}) \gdot (S, \pi(i))$ and all other coalitions are identical in $\pi$ and $\pi'$. This contradicts $\pi \gdoteq \pi'$ and finishes the proof.
\end{proof} 

The proof also highlights a decentralized way to compute an IS or SIS partition. Start from the partition of singletons and enable arbitrary deviations. For each partition $\pi_k$, the new partition $\pi_{k+1}$ is such that $\pi_{k+1} \gdot \pi_k$. Therefore, in a finite number of deviations, an IS or SIS partition is achieved.


\paragraph{Proof of Theorem~\ref{th:bottom-SNS}}


We now  present the proof of Theorem~\ref{th:bottom-SNS}.

\begin{proof}\label{proof_br_sns}
We show strong Nash stability for each 
maximal element according to $\gdoteq$ of the set of individual rational coalitions (please see Definition~\ref{def:ordering}). 
Consider the set $P = \{ \pi' \midd \pi' \; partitions  \; N \; and \; \forall S \in \pi', i \in S: \{i\} = av(i,S) \} $. Note that $P \neq \emptyset$, because the partition consisting of only singletons is in $P$ and $P$ is a finite set, because the number of partitions is finite. Denote by $\pi$ a maximal element of $P$ according to $\gdoteq$, i.e. $\pi \gdoteq \pi' \; \forall \pi' \in P$. 

Now assume $\pi$ is not strong Nash stable. Then a set of players $H \subseteq N $ and a partition $\pi'$ exist, such that
\begin{align*}
&(1) \quad \pi \stackrel{H}{\longrightarrow} \pi'\\
&(2) \quad \forall i \in H: \pi'(i) \succ_i \pi(i).
\end{align*}
We show that the partition $\pi'$ resulting after the deviation is still individually rational and therefore an element of $P$. 
Clearly $av(i,\pi'(i)) = \{i\} \; \forall i \in H$, because otherwise $\pi'(i) \succ_i \pi(i)$ would not hold. Now consider a player $j$ such that $\pi'(j) \cap H \neq \emptyset$. $\forall i\in H \cap \pi'(j): j\notin av(i,\pi'(j))$. Mutuality implies $i \notin av(j,\pi'(j))$ and therefore $av(j,\pi'(j)) = av(j,\pi(j)) = \{j\}$. All other players $j\in N$ are either not affected by any changes ($\pi(j) = \pi'(j)$) or they are left by some players in $H$ ($\pi'(j) \subset \pi(j)$). In both cases $av(j,\pi'(j)) = av(j,\pi(j)) = \{j\}$, so $\pi'$ is an element of $P$.

The last step is to show $\pi' \gdot \pi$, which contradicts the maximality of $ \pi$ in $P$. Because each player $i\in H$ improves, $|\pi'(i)| > |\pi(i)| \; \forall i \in H$, which follows from condition (iii) of bottom responsiveness. Take the largest coalition $S\in \pi$ such that $S \cap H \neq \emptyset$. Obviously any coalition bigger than $S$ in $\pi$ at least does not get smaller after the deviation, because it contains no players from $H$. Then one of following two cases holds:\\

Case 1: at least one coalition $T\in \pi$ with $|T| > |S|$ gets joined by some player $i\in H$. But then $\pi' \gdot \pi$, since $T$ increases in size and any larger coalition in $\pi$ does not get smaller.

Case 2: If Case 1 does not hold, we know that no coalition in $\pi$ larger than $S$ is joined by a player in $H$ and therefore stays the same. But one player $i\in S \cap H$ is part of a coalition $S'\in \pi'$ with $|S'|>|S|$. Since all coalitions in $\pi$, which are larger than $S$ also exist in $\pi'$, we can again conclude $\pi' \gdot \pi$.

In both cases $\pi' \gdot \pi$ which contradicts the maximality of $\pi$ in $P$ and finishes the proof.
\end{proof}

\section{Existence of stability for specific classes of games}\label{sec:app}

In this section, we highlight some natural subclasses of additive separable hedonic games~\cite[see e.g., ][]{ABS11a, BoJa02a, Gasa11a, Hajd06a} and hedonic games with \B-preferences~\cite[see e.g., ][]{CeRo01a,Hajd06a} which guarantee top responsiveness or bottom responsiveness. Consequently, our existence results in Section~\ref{sec:exists} and an existence result in the literature~\citep{DiSu07a} applies to these settings.


\paragraph{Additive separable hedonic games}

Additive separable hedonic games are one of the most well-studied and natural class of hedonic games~\cite[see e.g., ][]{ABS11a, BoJa02a, Gasa11a, Hajd06a}.
In an \emph{additively separable hedonic game (ASHG)} $(N,\pref)$, each player $i\in N$ has value $v_i(j)$ for player $j$ being in the same coalition as $i$ and if $i$ is in coalition $S\in \mathcal{N}_i$, then $i$ gets utility $\sum_{j\in S\setminus \{i\}}v_i(j)$. For coalitions $S,T\in\mathcal{N}_i$, $S \succsim_i T$ if and only if $\sum_{j\in S\setminus \{i\}}v_i(j) \geq \sum_{j\in T\setminus \{i\}}v_i(j)$. Therefore an ASHG can be represented as $(N,v)$. 

An ASHG is \emph{symmetric} if $v_i(j)=v_j(i)$ for any two players $i,j\in N$ and is \emph{strict} if $v_i(j)\neq 0$ for all $i,j\in N$. 

We now formally introduce two classes of additive separable hedonic games  which also satisfy top responsiveness and bottom responsiveness respectively. Both classes were introduced by \citet{DBHS06a}.
\begin{itemize}
\item An ASGH $(N,v)$ is \emph{appreciation of friends} if for all $i,j\in N$ such that $i\neq j$, the following holds:  $v_i(j)\in \{-1,+n\}$.
\item An ASGH $(N,v)$ is \emph{aversion to enemies} if for all $i,j\in N$ such that $i\neq j$, the following holds:  $v_i(j)\in \{-n,+1\}$.
\end{itemize}

It is clear that `appreciation of friends' and `aversion to enemies' games are ASHGs with strict preferences. 
\citet{SuSu10a} noted that `appreciation of friends' and `aversion to enemies' games satisfy top responsiveness and bottom responsiveness respectively. As a consequence, our main results apply to these games.

\begin{corollary}
There exists an SSNS partition for each symmetric `appreciation of friends' game.
\end{corollary}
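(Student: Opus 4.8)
The plan is to derive the corollary directly from Theorem~\ref{th:ssns-top}: it suffices to show that every symmetric `appreciation of friends' game satisfies both top responsiveness and mutuality (with respect to top responsiveness). Top responsiveness of `appreciation of friends' games is already recorded by \citet{SuSu10a}, so the genuine work is the verification of mutuality; to state that cleanly I would first make the structure of the choice sets explicit.

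First I would compute the choice sets. Fix $i\in N$ and split $N\setminus\{i\}$ into the friend set $F_i=\{j\neq i \midd v_i(j)=+n\}$ and the enemy set $E_i=\{j\neq i \midd v_i(j)=-1\}$, so that for any $S\in\mathcal{N}_i$ the utility is $u_i(S)=n\,|S\cap F_i|-|S\cap E_i|$. Over all $S\subseteq X$ with $i\in S$ one may decide membership of each $j\in X\setminus\{i\}$ independently; including a friend changes $u_i$ by $+n$ and including an enemy changes it by $-1$, so the unique maximiser is $ch(i,X)=\{i\}\cup(X\cap F_i)$, which gives Property~1. Properties~2 and~3 then follow by size/weight bookkeeping: if $ch(i,X)\succ_i ch(i,Y)$ then $|X\cap F_i|\ge|Y\cap F_i|+1$, and since any coalition contains at most $n-1$ enemies of $i$ we get $u_i(X)\ge n|X\cap F_i|-(n-1)\ge n|Y\cap F_i|+1>u_i(Y)$, so $X\succ_i Y$; and if $ch(i,X)=ch(i,Y)$ with $X\subsetneq Y$ then $X\cap F_i=Y\cap F_i$ and $Y\setminus X$ is a nonempty set of enemies of $i$, whence $u_i(Y)=u_i(X)-|Y\setminus X|<u_i(X)$.

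Next I would verify mutuality. For $X$ with $i,j\in X$ we have $j\in ch(i,X)$ iff $j\in F_i$ iff $v_i(j)=+n$, and symmetrically $i\in ch(j,X)$ iff $v_j(i)=+n$; symmetry of the game ($v_i(j)=v_j(i)$) makes these two conditions equivalent, so mutuality holds. With top responsiveness and mutuality in hand, Theorem~\ref{th:ssns-top} delivers an SSNS partition, proving the corollary. The only point one must not botch is the inequality showing a friend dominates the enemies: it rests squarely on $|N|=n$, so that a friend of weight exactly $+n$ outweighs at most $n-1$ enemies of weight exactly $-1$ each; this same bound is what makes Properties~2 and~3 go through, and everything else is routine.
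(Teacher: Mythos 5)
Your proposal is correct and follows exactly the paper's route: establish top responsiveness (which the paper simply cites from Suzuki and Sung) and mutuality from symmetry, then invoke Theorem~\ref{th:ssns-top}. The only difference is that you verify the choice-set structure and the three top-responsiveness properties explicitly rather than citing them, and those verifications are accurate.
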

\begin{proof}
	An `appreciation of friends' game satisfies top responsiveness. Furthermore, if the game is (additive separable) symmetric, then it also satisfies mutuality with respect to top responsiveness. Then, as a result of Theorem~\ref{th:ssns-top}, we get the corollary.
\end{proof}

\begin{corollary}
	There exists an SIS partition for each `aversion to enemies' game.
\end{corollary}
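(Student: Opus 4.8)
The plan is to reduce the corollary to Theorem~\ref{th:bottom-sis}: it suffices to show that every `aversion to enemies' game satisfies bottom responsiveness, since the existence of an SIS partition then follows at once. This is exactly the route indicated by the remark attributed to \citet{SuSu10a} just above, so the only real work is to verify the two conditions in the definition of bottom responsiveness for a game $(N,v)$ with $v_i(j)\in\{-n,+1\}$.

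First I would fix a player $i$ and call the players $j\neq i$ with $v_i(j)=-n$ the \emph{enemies} of $i$ and those with $v_i(j)=+1$ the \emph{friends} of $i$. If a coalition $S\in\mathcal{N}_i$ contains $e$ enemies and $f$ friends of $i$, then $i$'s utility in $S$ is $f-en$, and since $f\le n-1<n$ this is nonnegative exactly when $e=0$. The key observation is that $i$'s utility over subcoalitions $S'\subseteq S$ with $i\in S'$ is minimized precisely by retaining every enemy and discarding every friend, so $Av(i,S)=\{\,\{i\}\cup(\text{enemies of }i\text{ in }S)\,\}$ is a singleton whose unique element $av(i,S)$ has utility $-ne$. (In fact this shows `aversion to enemies' games are \emph{strongly} bottom responsive, but plain bottom responsiveness is all that Theorem~\ref{th:bottom-sis} needs.)

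It then remains to check the two conditions. For condition~1, observe that $av(i,X)\succ_i av(i,Y)$ means $X$ contains strictly fewer enemies of $i$ than $Y$, i.e.\ $e_X<e_Y$; the elementary estimate $(f_X-e_Xn)-(f_Y-e_Yn)=(f_X-f_Y)+n(e_Y-e_X)\ge-(n-1)+n=1>0$ then gives $X\succ_i Y$. For condition~2, the hypothesis $Av(i,X)\cap Av(i,Y)\neq\emptyset$ forces $av(i,X)=av(i,Y)$, hence $X$ and $Y$ contain exactly the same enemies of $i$; so $|X|\ge|Y|$ reduces to $f_X\ge f_Y$, which yields $X\succsim_i Y$. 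Having established bottom responsiveness, Theorem~\ref{th:bottom-sis} finishes the proof.

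I do not anticipate a genuine obstacle here: the whole argument is a short case analysis resting on the single inequality $n>n-1$, which makes one enemy outweigh all possible friends. The only points requiring a little care are that avoid sets must be computed over \emph{all} subsets of $S$ (not merely $S$ itself), and that the hypothesis $Av(i,X)\cap Av(i,Y)\neq\emptyset$ in condition~2 should be read, via the uniqueness just noted, as literal equality of the two avoid sets.
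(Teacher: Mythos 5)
Your proposal is correct and follows the same route as the paper: reduce the corollary to Theorem~\ref{th:bottom-sis} via the fact that `aversion to enemies' games are bottom responsive. The only difference is that the paper simply cites this fact (attributing it to Suzuki and Sung), whereas you verify both conditions directly; your verification, resting on the inequality $n > n-1$ so that a single enemy outweighs all friends, is accurate.
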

\begin{proof}
	The statement follows from Theorem~\ref{th:bottom-sis} and the fact that `aversion to enemies' games satisfy bottom responsiveness.
	\end{proof}

\begin{corollary}
There exists an SNS partition for each symmetric `aversion to enemies' game.
\end{corollary}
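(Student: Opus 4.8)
The plan is to reduce this corollary to Theorem~\ref{th:bottom-SNS}, so the task is just to verify that a symmetric `aversion to enemies' game satisfies \emph{strong} bottom responsiveness and the corresponding mutuality condition. Bottom responsiveness itself is already noted (following \citet{SuSu10a}), so the new content is the uniqueness of avoid sets and mutuality. First I would pin down the avoid sets explicitly. In an `aversion to enemies' game, $v_i(j)\in\{-n,+1\}$, so for $i$ in coalition $S$, the subcoalition $S'\subseteq S$ with $i\in S'$ minimizing $\sum_{j\in S'\setminus\{i\}}v_i(j)$ is obtained by including \emph{every} enemy of $i$ that lies in $S$ (each contributes $-n$) and \emph{no} friend of $i$ in $S$ (each would contribute $+1$). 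This set is clearly uniquely determined, so $|Av(i,S)|=1$ and
\begin{align*}
av(i,S)=\{i\}\cup\{j\in S\setminus\{i\}\midd v_i(j)=-n\}.
\end{align*}
Hence the game is strong bottom responsive.

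Next I would check mutuality with respect to strong bottom responsiveness: for all $i,j$ and all $X$ with $i,j\in X$, $i\in av(j,X)\Leftrightarrow j\in av(i,X)$. By the formula above, $j\in av(i,X)$ iff $v_i(j)=-n$, and $i\in av(j,X)$ iff $v_j(i)=-n$; symmetry gives $v_i(j)=v_j(i)$, so the two conditions coincide. (Note that with $n\ge 2$ the two possible values $-n$ and $+1$ are distinct, so ``$v_i(j)=-n$'' is an unambiguous test; the degenerate case $n\le 1$ is trivial.) This establishes mutuality, and I would also remark in passing that the verification of plain bottom responsiveness of conditions 1) and 2) is the routine computation already attributed to \citet{SuSu10a}: condition 2) holds because two coalitions $X,Y$ sharing an avoid set force $j$'s set of enemies inside $X$ and inside $Y$ to coincide, after which the larger coalition only adds friends, hence is weakly preferred; condition 1) is immediate from the $\{-n,+1\}$ structure.

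With both hypotheses of Theorem~\ref{th:bottom-SNS} verified, the theorem directly yields an SNS partition for every symmetric `aversion to enemies' game, which is the claim. I expect the main (minor) obstacle to be making the mutuality verification fully precise about the fact that $av(i,X)$ depends only on $i$'s valuation and not on the ambient coalition $X$ beyond which enemies happen to be present—once that is stated cleanly, symmetry of $v$ closes the argument with no further work. There is no hard step here; the corollary is essentially a bookkeeping consequence of the earlier theorem together with the observation that the strict $\{-n,+1\}$ valuation structure makes avoid sets singletons in the sense required by strong bottom responsiveness.
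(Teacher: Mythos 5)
Your proposal is correct and follows essentially the same route as the paper: verify that a symmetric `aversion to enemies' game satisfies strong bottom responsiveness and (bottom responsive) mutuality, then invoke Theorem~\ref{th:bottom-SNS}. The only difference is that you make explicit the formula $av(i,S)=\{i\}\cup\{j\in S\setminus\{i\} : v_i(j)=-n\}$ to justify uniqueness of avoid sets and mutuality, whereas the paper derives these more tersely from strictness and symmetry of the additively separable valuations.
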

\begin{proof}
	It is already known that `aversion to enemies' games satisfy bottom responsiveness. Since `aversion to enemies' are additively separable hedonic games with strict preferences, they not only satisfy bottom responsiveness but also strong bottom responsiveness. If `aversion to enemies' have symmetric preferences, then they not only satisfy strong bottom responsiveness but also (bottom responsive) mutuality. Therefore, we can apply Theorem~\ref{th:bottom-SNS} to derive the corollary. 
\end{proof}

\paragraph{\B-hedonic games}

\eat{
\begin{table}
\begin{center}
\begin{tabular}{r|c|c|c|c|}
\multicolumn{1}{r}{}
 &  \multicolumn{1}{c}{a}
 & \multicolumn{1}{c}{b} 
 & \multicolumn{1}{c}{c}
 & \multicolumn{1}{c}{d}\\
\cline{2-5}
a&4&1&2&3 \\
\cline{2-5}
b&1&4&3&2\\
\cline{2-5}
c&2&3&4&1\\
\cline{2-5}
d&3&2&1&4\\
\cline{2-5}
\end{tabular}
\end{center}
\caption{A 4-player \B-hedonic game with preferences, which are strict, have no unacceptability and satisfy symmetry represented by the table $(t_{ij})$. Entry $t_{ij}$ is the rank of player $j$ in $i$'s preference list. 
}
\label{table:B-game}
\end{table}
}

Finally, we show another important subclass of hedonic games called \B-hedonic games~\citep{CeRo01a, Cech08a} satisfies top responsiveness. In \B-hedonic games, players express preferences over players and these preferences over players are naturally extended to preferences over coalitions.
We will assume that $\max_{i}(\emptyset)=\{i\}$.
	In \emph{hedonic games with \B-preferences} (in short \B-hedonic games), for $S,T\in \mathcal{N}_i$, $S \sPref[i] T$ if and only if one of the following conditions hold: 
	
	\begin{enumerate}
	\item for each $s\in \max_{i}(S\setminus \{i\})$ and $t\in \max_{i}(T\setminus \{i\})$, $s\succ_i t$, or
	\item for each $s\in \max_{i}(S\setminus \{i\})$ and $t\in \max_{i}(T\setminus \{i\})$, $s\sim_i t$ and $|S|<|T|$. 
\end{enumerate}

	A \B-hedonic has strict preferences for each $i\in N$ and $j,k\in N$, the following holds: $j\neq k$ $\Rightarrow$  $j\nsim_i k$.
	


Then, we have the following proposition.

\begin{proposition}\label{prop:B-top}
\B-hedonic games with strict preferences satisfy top responsiveness. 
\end{proposition}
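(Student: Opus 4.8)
The plan is to pin down the choice sets $ch(i,X)$ in a $\mathcal{B}$-hedonic game and then verify the three defining conditions of top responsiveness directly against the $\mathcal{B}$-preference. The starting observation is the shape of a strict $\mathcal{B}$-preference: for $i\in N$ and $S,T\in\mathcal{N}_i$, one has $S\succ_i T$ exactly when $\max_i(S\setminus\{i\})\succ_i\max_i(T\setminus\{i\})$, or $\max_i(S\setminus\{i\})=\max_i(T\setminus\{i\})$ and $|S|<|T|$ --- strictness makes each $\max_i$ single-valued and collapses $s\sim_i t$ to $s=t$, and the convention $\max_i(\emptyset)=\{i\}$, with $\{i\}$ being $i$'s least preferred coalition, handles the singleton. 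In short, $i$ orders coalitions lexicographically: first by the best member, then, as a tie-breaker, by smaller size.

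Next I would compute $Ch(i,X)$ for $X\in\mathcal{N}_i$. Writing $j^\ast=\max_i(X\setminus\{i\})$, every $S'\subseteq X$ with $i\in S'$ satisfies $\max_i(S'\setminus\{i\})\preceq_i j^\ast$, so by the lexicographic description the best attainable best member is $j^\ast$, attained exactly by the sets that contain $j^\ast$; among these the unique smallest is $\{i\}\cup\{j^\ast\}$. Hence $Ch(i,X)=\{\{i\}\cup\{j^\ast\}\}$ is a singleton, which gives Condition~1, and $ch(i,X)=\{i\}\cup\{j^\ast\}$, whose best member is precisely $\max_i(X\setminus\{i\})$.

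For Condition~2, assume $ch(i,X)\succ_i ch(i,Y)$ and set $a=\max_i(X\setminus\{i\})$, $b=\max_i(Y\setminus\{i\})$, so $ch(i,X)=\{i\}\cup\{a\}$ and $ch(i,Y)=\{i\}\cup\{b\}$, both of size at most two. If $a=b$ the two choice sets coincide, contradicting the strict preference between them; hence $a\neq b$, and then the $\mathcal{B}$-comparison of $ch(i,X)$ with $ch(i,Y)$ can only be won through $a\succ_i b$, which is exactly the clause that forces $X\succ_i Y$. For Condition~3, assume $ch(i,X)=ch(i,Y)$ and $X\subsetneq Y$; since $ch(i,Z)$ determines $\max_i(Z\setminus\{i\})$, the coalitions $X$ and $Y$ share the same best member, while $|X|<|Y|$, so the size tie-breaker in the $\mathcal{B}$-preference yields $X\succ_i Y$.

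The only genuinely fiddly point is the bookkeeping around the singleton coalition: one must use consistently that $\{i\}$ is $i$'s least preferred coalition --- so that $ch(i,X)=\{i\}\cup\{j^\ast\}$ rather than $\{i\}$ whenever $X\neq\{i\}$, and so that the degenerate instances $X=\{i\}$ or $Y=\{i\}$ of Conditions~2 and~3 are handled. Beyond that there is no real obstacle: each of the three conditions is a one-line consequence of the ``best member, then size'' description of $\mathcal{B}$-preferences together with the explicit form of $ch(i,X)$.
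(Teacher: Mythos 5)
Your proposal is correct and follows essentially the same route as the paper's proof: identify $ch(i,X)$ as $\{i\}\cup\max_i(X\setminus\{i\})$ and then check the three conditions of top responsiveness directly from the ``best member, then size'' structure of $\mathcal{B}$-preferences. You are somewhat more careful than the paper about the singleton/degenerate cases and about ruling out the size tie-breaker in Condition~2, but the argument is the same.
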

\begin{proof}
We show that \B-hedonic games with strict preferences satisfy all the three conditions of top responsiveness.
\begin{enumerate}

\item Firstly, for each $X\in \mathcal{N}_i$, $Ch(i,X)=\{\max_i{X}\cup \{i\}\}$ and thus $|Ch(i,X)|=1$. 

\item For a pair $X,Y\in \mathcal{N}_i$, assume that $ch(i,X)\succ_i ch(i,Y)$. This means that $\{\max_i(X)\}\cup \{i\} \succ_i \{\max_i{Y}\}\cup \{i\}$. Since the best player in $X$ is more preferred by $i$ than the best player in $Y$, then by the definition of \B-hedonic games, $X\succ_i Y$.

\item Finally, for each pair $X,Y\in \mathcal{N}_i$, assume that $ch(i,X)= ch(i,Y)$ and $X\subset Y$. Then, the player most preferred by $i$ in $X$ is the same as the player player most preferred by $i$ in $Y$. Therefore, by the definition of \B-hedonic games, $X\succ_i Y$.
\end{enumerate}



This completes the proof.
\end{proof}

Therefore, as a corollary we get the following statement which was proved by \citet{CeRo01a}.

\begin{corollary}\label{cor:B-hedonic}
	For each \B-hedonic game with strict preferences, a strict core stable partition is guaranteed to exist. 
\end{corollary}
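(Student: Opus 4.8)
The plan is to chain Proposition~\ref{prop:B-top} with a known consequence of top responsiveness, so that the corollary becomes essentially a one-line deduction. Proposition~\ref{prop:B-top} already establishes that every \B-hedonic game with strict preferences is top responsive; it therefore suffices to recall that top responsiveness \emph{alone} guarantees a non-empty strict core. This is exactly the theorem of \citet{DiSu06a,DiSu07a}: running the (simplified) Top Covering Algorithm on any top responsive hedonic game produces a partition that admits no weakly blocking coalition. Note that mutuality is not needed here, so we invoke this statement rather than the stronger Theorem~\ref{th:ssns-top}.

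Concretely, I would proceed in three steps. First, apply Proposition~\ref{prop:B-top} to the given \B-hedonic game $(N,\pref)$ to obtain top responsiveness. Second, run the Top Covering Algorithm (Algorithm~\ref{alg:TCA}) on $(N,\pref)$; this is legitimate, since the algorithm's only input requirement is top responsiveness. Third, cite \citet{DiSu06a,DiSu07a} for the fact that the resulting partition is strict core stable. The corollary follows immediately, and it recovers the result originally established by \citet{CeRo01a}.

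I do not expect any genuine obstacle here; the argument is pure bookkeeping. The single point worth a remark is that the \B-extension of a strict ranking of players does yield a bona fide preference profile in the sense of Section~\ref{sec:prel}: each $\succsim_i$ is reflexive, complete, and transitive on $\mathcal{N}_i$, with $S\sim_i T$ precisely when $S$ and $T$ share the same most-preferred member and $|S|=|T|$. This is already implicitly relied upon in the statement and proof of Proposition~\ref{prop:B-top}, so the hypotheses of the \citeauthor{DiSu06a} theorem are met and nothing further needs to be verified.
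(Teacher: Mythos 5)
Your proposal is correct and follows exactly the route the paper takes: invoke Proposition~\ref{prop:B-top} for top responsiveness and then cite the result of \citet{DiSu07a} that top responsiveness alone (no mutuality needed) guarantees a strict core stable partition. The additional remarks about the Top Covering Algorithm and the well-definedness of the \B-extension are harmless elaboration on the same argument.
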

\begin{proof}
	\citet{DiSu07a} showed that for hedonic games satisfying top responsiveness admit a strict core stable partition. Since \B-hedonic games satisfy top responsiveness, they admit a strict core stable partition.
\end{proof}

It will be interesting to see whether there are any natural restrictions on \B-hedonic games with strict preferences such that not only top responsiveness is satisfied but also (top responsive) mutuality is satisfied. In that case, we can apply Theorem~\ref{th:ssns-top} concerning SSNS to \B-hedonic games.

Our demonstrated connection between \B-hedonic games and top responsiveness goes deeper. 
The essential fact behind previous results concerning \B-hedonic games with strict preferences is that they satisfy top responsiveness.
It turns out that the Top Covering Algorithm in \citep{AlRe04a} generalizes the B-STABLE algorithm in \citep{CeRo01a} and in fact Theorems 4.4 and 5.2 in \citep{AlRe04a} imply Theorem 1 and Theorem 2 in \citep{CeRo01a} respectively. This connection seems to have been unnoticed in the literature. 

\section{Conclusions}\label{sec:conc}

To conclude, we tried to paint a clearer picture of the landscape of stability concepts used in coalition formation games. The concepts ranged from standard ones such as the core to recently introduced concepts such as strong Nash stability. The core and strong Nash stability were generalized to strong individual stability and strict strong Nash stability respectively. The basic inclusion relationships between the stability concepts are depicted in Figure~\ref{fig:relations}. Since hedonic games generalize various matching settings, the relations between the stability concepts also hold in matching settings such as two-sided matching, roommate matching etc.

We then examined restrictions on the preferences of agents which guarantee stable outcomes for the new stability concepts. Three main existence results (Theorems 1, 2 and 3) pertaining to top responsiveness and bottom responsiveness were presented. Our results strengthen or complement a number of results in the literature. We also showed that none of our existence results can be extended to a stronger known stability concept. It was seen that the theorems apply to some natural subclasses of hedonic games which have already been of interest among game-theorists. 
It will be interesting to find further applications of our existence results. 

Identifying the impact of preference restrictions on stability also has algorithmic consequences. Recently, hedonic games have attracted research from an algorithmic and computational complexity point of view. There are various algorithmic questions such as checking the existence of and computing stable partitions for different representations of hedonic games~\citep[see e.g., ][]{Cech08a,Hajd06a}. A general framework of preference restrictions and their impact on stability of partitions promises to be useful in devising generic algorithmic techniques to compute stable partitions. For example, we noted that the Top Covering Algorithm in \citep{AlRe04a} generalizes the B-STABLE algorithm in \citep{CeRo01a} by utilizing the insight that \B-hedonic games with strict preferences satisfy top responsiveness. We also mention the following interesting algorithmic questions. For hedonic games represented by individually rational lists of coalitions~\citep{Ball04a}, what is the computational complexity of testing whether the game satisfies top reponsiveness or bottom responsiveness?

Our focus in the paper has been on sufficient conditions which guarantee the existence of stable outcomes. It will be interesting to see what additional conditions are required to ensure uniqueness of stable partitions for different notions of stability.
Finally, characterizing the conditions for the existence of stability remains an open problem. 



\end{document}